\documentclass[letterpaper,journal]{IEEEtran}

\usepackage[english]{babel}
\usepackage[utf8]{inputenc}
\usepackage[T1]{fontenc}

\usepackage{graphicx}
\usepackage{xcolor}
\usepackage[caption=false,font=normalsize,labelfont=sf,textfont=sf]{subfig}

\makeatletter
\let\IEEEorig@makecaption\@makecaption
\def\@IEEEfigurestring{figure}
\long\def\@makecaption#1#2{%
  \ifx\@captype\@IEEEfigurestring
    \@IEEEfigurecaptionsepspace
    \setbox\@tempboxa\hbox{\normalfont\small {#1.}\nobreakspace\nobreakspace #2}%
    \ifdim \wd\@tempboxa >\hsize
      \setbox\@tempboxa\hbox{\normalfont\small {#1.}\nobreakspace\nobreakspace}%
      \parbox[t]{\hsize}{\normalfont\small\noindent\unhbox\@tempboxa#2}%
    \else
      \ifCLASSOPTIONconference
        \hbox to\hsize{\normalfont\small\hfil\box\@tempboxa\hfil}%
      \else
        \hbox to\hsize{\normalfont\small\box\@tempboxa\hfil}%
      \fi
    \fi
  \else
    \IEEEorig@makecaption{#1}{#2}%
  \fi
}
\makeatother

\usepackage{amsmath,amssymb,amsthm,mathrsfs,bbm,mathbbol}
\usepackage{mathabx} %
\usepackage{scalerel,stackengine}

\usepackage{enumitem}
\usepackage{booktabs}
\usepackage{multirow}
\usepackage{makecell}

\usepackage{algorithm}
\usepackage{algorithmic}
\usepackage{url}  
\usepackage{cite}
\usepackage[hidelinks]{hyperref}

\newcommand{\coloneqq}{\mathrel{\mathop:}=}
\DeclareSymbolFontAlphabet{\amsmathbb}{AMSb}%

\newcommand{\cp}[1]{\ifmmode {\mathcal{#1}}\else ${\mathcal{#1}}$\fi}

\newcommand{\bW}{\mathbf{W}}
\newcommand{\bX}{\mathbf{X}}

\newcommand{\bb}{\mathbf{b}}

\newcommand{\bu}{\mathbf{u}}

\newcommand{\bx}{\mathbf{x}}
\newcommand{\bw}{\mathbf{w}}
\newcommand{\bz}{\mathbf{z}}

\newcommand{\calC}{\mathcal{C}}

\newcommand{\calX}{\mathcal{X}}

\newcommand{\bbE}{\mathbb{E}}

\newcommand{\bbQ}{\mathbb{Q}}

\newcommand\tensor[1]{%
  \ifcat\noexpand#1\relax %
    \mathbb{#1}%
  \else
      \if\relax\detokenize\expandafter{\romannumeral-0#1}\relax  %
        \mathbb{#1}
      \else
        \boldsymbol{\mathcal{#1}}%
      \fi
  \fi }

\definecolor{darkgreen}{rgb}{0., 0.4, 0.}
\definecolor{darkorange}{rgb}{1.0, 0.55, 0.0}

\newtheorem{theorem}{Theorem}%

\newtheorem{proposition}{Proposition}

\newtheorem{definition}{Definition}
\newtheorem{remark}{Remark}

\newtheorem{example}{Example}

\stackMath

\newcommand\mywidehat[1]{%
\savestack{\tmpbox}{\stretchto{%
  \scaleto{%
    \scalerel*[\widthof{\ensuremath{#1}}]{\kern3pt\mathchar"0362\kern3pt}%
    {\rule{0ex}{\textheight}}%
  }{\textheight}%
}{2.4ex}}%
\stackon[-7.2pt]{#1}{\tmpbox}{}%
}

\hypersetup{
  pdftitle={Group-invariant Coresets for Data-efficient Active Learning},
  pdfauthor={Luciano C. Ayres, Jose C. M. Bermudez, Sergio J. M. de Almeida, and Ricardo A. Borsoi},
  pdfsubject={Preprint submitted to IEEE Transactions on Neural Networks and Learning Systems},
  pdfkeywords={active learning, coresets, invariance, quotient space}
}
\title{Group-invariant Coresets for Data-efficient Active Learning}
\author{Luciano C. Ayres, Jos\'{e} C. M. Bermudez, S\'{e}rgio J. M. de Almeida, and Ricardo A. Borsoi%
\thanks{L. C. Ayres and J. C. M. Bermudez are with Universidade Federal de Santa Catarina, Florian\'{o}polis, SC, Brazil; S. J. M. de Almeida is with Universidade Cat\'{o}lica de Pelotas, Pelotas, RS, Brazil; R. A. Borsoi is with Universit\'{e} de Lorraine, CNRS, CRAN, Vandoeuvre-l\`{e}s-Nancy, France. E-mails: lucayress@gmail.com, j.bermudez@ieee.org, sergio.almeida@ucpel.edu.br, and ricardo.borsoi@univ-lorraine.fr. Corresponding author: L. C. Ayres.}%
\thanks{This work has been submitted to the IEEE for possible publication. Copyright may be transferred without notice, after which this version may no longer be accessible.}}
\begin{document}

\maketitle

\begin{abstract}
Active learning reduces labeling cost by querying the most informative unlabeled samples, but standard coreset methods ignore known data symmetries and can waste budget on transformed versions of the same instance. We propose GRINCO, a group-invariant coreset framework that performs acquisition in the quotient space induced by a transformation group, so that selection operates on orbits rather than raw samples. The method uses either canonical representatives or learned orbit-separating invariant embeddings to define practical quotient metrics, and combines quotient-space $k$-center selection with invariant training through an orbit-averaged loss. We further derive a generalization bound that relates excess orbit-averaged risk to quotient-space coverage, label uncertainty, and intra-orbit variability. Experiments on synthetic scale-invariant data and image benchmarks with rotation-induced redundancy show that GRINCO improves orbit coverage and achieves stronger label efficiency than conventional coreset baselines, especially when group-induced redundancy is substantial.
\end{abstract}

\begin{IEEEkeywords}
Active learning, coresets, invariance, quotient space.
\end{IEEEkeywords}

\section{Introduction}

\IEEEPARstart{A}{cquiring} high-quality labels for classification tasks can be expensive and time-consuming, particularly in domains such as medical imaging~\cite{zhang2023learning}, precision agriculture~\cite{li2023label}, and remote sensing~\cite{patel2023comprehensive}.
This has made data efficiency a key objective in machine learning~\cite{gui2024sslsurvey,lin2025hslabeling}.
\emph{Active Learning} (AL) aims to reduce labeling burden by selecting only the most informative samples for annotation~\cite{settles2009}. In contrast to standard supervised learning, AL adaptively queries data in a \emph{pool} of unlabeled samples which are expected to provide the greatest utility for model improvement. By iterating between model training and targeted querying, active strategies can often reach state-of-the-art performance with substantially fewer labeled examples~\cite{li2024survey}.

AL frequently relies on diversity/coverage criteria to avoid redundant queries. One of the most effective techniques is the use of \textit{coresets}~\cite{agarwal2005geometric}, which construct a small but representative subset of the dataset~\cite{sener2018coresets,moser2025coreset}. 
The fundamental design is based on geometric covering objectives such as the $k$-center objective, which selects a subset of $k$ samples that best covers the unlabeled pool~\cite{sener2018coresets}.
This approach has later been refined with criteria that aim to favor locally sparse regions of the pool~\cite{kim2022dacs} or to improve robustness in low-budget regimes~\cite{yehuda2022covering,bae2024generalized}. 
Recent work also considered the use of fairness constraints~\cite{xiong2024fair}, robust metrics~\cite{acharya2025geometric} as well as statistical models of repulsive inter-sample interactions~\cite{bardenet2024ddps} to improve different aspects of coreset selection (see Section~\ref{subsec:coresets} for more details).

Despite their success, coreset methods suffer from a key limitation: existing works do not properly account for data invariances (i.e., transformations of the input that preserve semantic content). This can lead to the selection of redundant samples that are transformed versions of the same datum. 
On the other hand, significant effort has been dedicated to accounting for data symmetries in supervised and representation learning using the framework of \emph{group theory} (see Section~\ref{subsec:invarianceGeneral}), which improves sample efficiency and generalization~\cite{weiler2024gauge}.
This can be performed by designing neural network architectures that are intrinsically invariant to the action of a known symmetry group~\cite{cohen2016group,worrall2017harmonic,finzi2020generalizing,zaheer2017deepsets,bronstein2021geometric}.

A different approach is to enforce invariance through carefully designed data augmentation~\cite{chen2020group}, or by the use of approaches which map transformed versions of the same sample to a canonical representative~\cite{aslan2023group}.
Moreover, recent self-supervised representation learning (SSL) methods such as RotNet~\cite{gidaris2018unsupervised}, SimCLR~\cite{chen2020simclr} and DINO~\cite{caron2021dino} learn embeddings that are empirically robust to augmentation-induced nuisance factors due to carefully designed training criteria, including work that learns invariances to a specific group directly from training data to yield invariant embeddings~\cite{benton2020learning}.
However, these approaches primarily address model and representation learning rather than data acquisition and coreset selection.

Recent works have shown the importance of employing good representations to achieve competitive performance in coreset selection in AL. 
For instance,~\cite{bengar2021reducing} showed that using self-supervised learning (SSL) to improve the representation can have a larger impact on AL performance than the choice of sophisticated query strategies.
Similarly,~\cite{doucet2024bridging} investigated hybrid strategies that combine diversity-based selection in early query rounds followed by uncertainty-based sampling, along with SSL to improve the learned models.
Deep metric learning has also been used in AL for coreset selection in 3D image segmentation~\cite{vepa2024integrating} to obtain a diversity objective where the distances better reflect task-relevant similarity.
Another AL algorithm targeting 3D molecular graphs has incorporated graph isomorphism and isometries in the designed querying rule to avoid selecting redundant queries under physically meaningful molecular graph transformations~\cite{subedi2024empowering}.

Despite the effectiveness of the aforementioned strategies in promoting sample diversity in coreset selection and AL, they often overlook inherent data symmetries, which can lead to the selection of redundant (transformed) samples.
In this paper, we address this gap by proposing a general \emph{group-invariant coreset construction framework} (which we call \emph{GRINCO}) for AL that performs acquisition at the level of \emph{orbits} induced by a known transformation group (i.e., the set of admissible group-transformed versions of a sample)~\cite{armstrong1997groups,weiler2024gauge}. By designing a coverage criterion in the corresponding \emph{quotient space} induced by the group action (see Section~\ref{subsec:invarianceGeneral}), the proposed method enforces non-redundancy at the orbit level. 

Candidate pool elements are effectively organized into orbits and label budget is allocated to semantically distinct samples, promoting data efficiency. Thus, our GRINCO coreset framework can achieve state-of-the-art performance in AL with fewer labeled examples.
Unlike previous works that improved representations used in specific application tasks (e.g., molecular graphs or 3D segmentation)~\cite{bengar2021reducing,subedi2024empowering,vepa2024integrating}, our framework provides a general coreset formulation that is not tied to a specific domain or task, and can be instantiated for different transformation groups and application domains.
Moreover, by leveraging the concept of \emph{orbit-separating functions} \cite{nguyen2023learningSymmetrizationOrbitDistance,dym2024lowDimensionalInvariantEmbeddings} (mappings that distinguish orbits), we introduce quotient metrics that can be implemented using representation learning to compute distances between orbits. 
This leads to a principled and computationally efficient approach for coreset selection in the quotient space applicable to different groups.

Furthermore, our coreset acquisition criterion uses a symmetry group that can be aligned with the invariances accounted for in the training of the downstream model, which can be achieved via invariant architectures or data-augmentation strategies~\cite{chen2020group,cohen2016group,finzi2020generalizing}.
A theoretical analysis for the proposed method yields generalization bounds that link the orbit-averaged risk (i.e., the expected loss averaged over the group orbit of each sample) to coverage in the quotient metric, while also accounting for label noise and intra-orbit variability.
Experimental results show that our approach achieves improved performance compared to state-of-the-art coreset-based AL methods, attaining comparable accuracy with fewer labeled examples, particularly in datasets that contain significant group-induced redundancy.
The contributions of this work include:

\begin{itemize}
    \item A group-theoretic coreset formulation that performs acquisition over orbits in the quotient space induced by the transformation group, ensuring a coverage criterion that inherently excludes redundant copies of the same underlying sample.
    \item An efficient instantiation of the framework via quotient-space metrics (distances between orbits) induced by orbit-separating mappings (analytic canonicalization or learned embedders), together with a $k$-centers-style selection procedure in the induced quotient metric coupled with a training objective based on orbit-averaged losses.

    \item A theoretical analysis linking coverage in the quotient metric to excess orbit-averaged risk. This yields generalization bounds and a term-wise interpretation that clarifies the impact of label noise, intra-orbit variability, and group-prior misspecification on the resulting error.
\end{itemize}

The remainder of this paper is structured as follows. Section~\ref{sec:background_review} reviews the background and related work. Section~\ref{sec:proposed_approach} introduces our proposed framework, including the group invariant coreset formulation, the corresponding AL framework, and its theoretical analysis. Section~\ref{sec:experiments} presents the experimental results. Section~\ref{sec:conclusion} concludes the paper. Appendix~\ref{app:generalization-proof} provides the proof of the generalization theorem.

\section{Background on AL, coresets and group theory}\label{sec:background_review}

Let $\mathcal{X} \subseteq \mathbb{R}^d$ denote the input space and $\mathcal{Y} = \{1,2,\dots,C\}$ the label set.
We consider a labeled dataset $\mathcal{D}_L = \{(\bx_i, y_i)\}_{i=1}^{N_L}$, where each feature vector $\bx_i \in \mathcal{X}$ is associated with a class label $y_i \in \mathcal{Y}$.
A parametric model $f_{\bw}: \mathcal{X} \to \Delta_C$, with parameter vector $\bw \in \mathbb{R}^m$, where $\Delta_C := \{\boldsymbol{p} \in \mathbb{R}^C : p_c \ge 0,\; \sum_{c=1}^C p_c = 1\}$ is the $C$-class probability simplex, is learned by minimizing a pointwise loss $\mathcal{L}(f_{\bw}(\bx), y)$ (e.g., cross-entropy). 
Here $f_{\bw}(\bx) \in \Delta_C$ is the model output (class-probability vector) for input $\bx \in \mathcal{X}$ and $y \in \mathcal{Y}$ is the corresponding true label.

Assuming $(\bx_i, y_i)$ are drawn i.i.d. from a joint measure $p(\bX,Y)$ over $\mathcal{X} \times \mathcal{Y}$, the empirical risk is denoted by
\begin{equation}
    \hat{\mathscr{R}}(f_{\bw}) 
    \;=\;
    \frac{1}{N_L}
    \sum_{i=1}^{N_L}
    \mathcal{L}(f_{\bw}(\bx_i), y_i).
\end{equation}
This constitutes an approximation to the \emph{population risk} $\mathscr{R}(f_{\bw}) =\bbE_{p(\bX,Y)}\!\big[\mathcal{L}(f_{\bw}(\bX), Y)\big]$,
where $\bbE_{p(\bX,Y)}$ denotes expectation with respect to the measure $p(\bX,Y)$.

\subsection{Active Learning}
In AL, an additional unlabeled dataset $\mathcal{D}_U = \{\bx_i\}_{i=N_L+1}^{N_L+N_U}$ with $N_U$ samples is also available. 
Many tasks demand substantial labeled data for robust classification \cite{zhang2023learning,zhao2023autonomous}. AL alleviates this burden by iteratively querying labels for the most informative samples in $\mathcal{D}_U$ \cite{cohn1994improving}.
In a pool-based scenario \cite{settles2009}, one starts with a small labeled set $\mathcal{D}_L$ and a large unlabeled pool $\mathcal{D}_U$. A query strategy $\mathcal{A}$ identifies the point $\bx^* \in \mathcal{D}_U$ that maximizes an informativeness criterion. An oracle $\mathcal{O}$ then provides the label $y^*$, after which $(\bx^*, y^*)$ is added to $\mathcal{D}_L$ and the model is retrained. This process continues until a labeling budget is exhausted or desired performance is reached \cite{sun2010survey,ren2021survey}.

In practice, one often queries a full batch of samples for labeling at each iteration rather than a single point, as this reduces retraining overhead and leverages parallel labeling.
Among various query strategies, such as uncertainty sampling, query-by-committee, expected model change, variance reduction, and density-weighted methods, coreset approaches have been recently emphasized for selecting points that comprehensively represent the data and improving efficiency \cite{kumar2020active,tharwat2023survey}.
A more detailed definition of the batch AL selection, labeling, dataset-update, and retraining will be presented in Section~\ref{subsec:invar-batch-al} in the context of the proposed method.

\subsection{Coresets} \label{subsec:coresets}

Given a dataset $\mathcal{D}=\{\bx_i\}_{i=1}^N$ with $N$ samples, a \emph{coreset} $\mathcal{C}\subseteq \mathcal{D}$ of size $|\mathcal{C}|\ll N$ is a small representative subset of the original dataset such that solving some optimization problem using only the samples in $\mathcal{C}$ closely approximates the solution that would be obtained on the full dataset~\cite{agarwal2005geometric,feldman2020core}. Coresets are also equipped with weights $w_{\bx}>0$ which reflect the importance of each datapoint $\bx \in \mathcal{C}$.
Originally introduced to tackle geometric approximation tasks like $k$-means and $k$-median clustering \cite{agarwal2005geometric}, coresets have been applied to regression, classification, and Bayesian inference problems \cite{feldman2020core}. 

In AL, coreset-based approaches select samples from $\mathcal{D}_U$ that maximize coverage and diversity, leading to efficient querying strategies that reduce labeling redundancy \cite{sener2018coresets}. 
Of particular interest are geometric coresets, which are constructed based only on the input data $\bx$. A classic example is the $k$-centers formulation, which aims to select $k$ representative centers such that every other point in $\mathcal{D}_U$ is as close as possible to its nearest center \cite{Gonzalez1985}. 
This objective (which is NP-hard) is commonly approximated by a greedy procedure that iteratively adds the point in $\mathcal{D}_U$ which is the farthest from the coreset $\mathcal{C}$ that was selected in the previous iteration. This prevents repeatedly querying samples from densely populated regions.
Moreover, no label or model-specific information is required, making these coresets agnostic to the subsequent task.

Recent works have proposed other coreset extensions to address specific challenges, including fairness, robustness, and metric learning.
For instance, \cite{xiong2024fair} proposes a framework that generates weighted synthetic samples by minimizing the Wasserstein distance between the coreset and the original dataset, subject to demographic parity (fairness) constraints ignored by standard geometric approaches.
Probabilistic sampling methods based on determinantal point processes (which model repulsive interactions) have also been explored for coreset construction~\cite{bardenet2024ddps}.
Representation learning has also been integrated in subset selection for high-dimensional data to find better metrics when the raw input distances are uninformative~\cite{vepa2024integrating}.
The sensitivity of empirical mean-based coreset selection to noisy or corrupted data was also addressed in~\cite{acharya2025geometric} by using the geometric median, which offers a higher breakdown point against outliers. 
However, these methods can be computationally intensive (especially for large datasets) and none of these works properly address group-induced redundancy via invariance-aware coreset selection, which is the focus of our proposed method.

\subsection{Group-invariance in statistical learning} \label{subsec:invarianceGeneral}

To provide the necessary background for our invariant-aware framework, we briefly review the relevant group-theoretic terminology \cite{armstrong1997groups,weiler2024gauge}. We focus on the group-theoretic concepts that enable us to rigorously define equivalence relations among transformed samples, and conclude by linking these concepts to statistical learning and risk minimization.

\subsubsection{Groups, Actions, and Orbits}\label{subsec:group-theory-basics}

A \emph{group} is a pair $(G,\cdot)$ where $G$ is a set with a binary operation that satisfies closure, associativity, has an identity element $e$, and provides an inverse for every element.
A group $G$ is finite if $|G|<\infty$. If the multiplication and inversion are smooth maps, $G$ is said to be a \emph{Lie group}.
Given a set (data space) $\mathcal{X}$, a \emph{left action} of $G$ on $\mathcal{X}$ is a map
\begin{equation}
G\times \mathcal{X}\to \mathcal{X},\quad (g,\bx)\mapsto g\cdot \bx
\end{equation}
such that $e\cdot\bx=\bx$ and $(gh)\cdot\bx=g\cdot(h\cdot\bx)$ for all transformations $g,h\in G$, $\bx\in \mathcal{X}$.
If $\mathcal{X}$ is a vector space, a (linear) representation is a homomorphism $\rho_{\mathcal{X}}:G\to GL(\mathcal{X})$ so that $g\cdot \bx=\rho_{\mathcal{X}}(g)\,\bx$.
A map $f:\mathcal{X}\to \mathcal{Y}$ is \emph{$G$-invariant} if $f(g\cdot \bx)=f(\bx)$ for all $g\in G$. 
The \emph{orbit} of $\bx\in \mathcal{X}$ under a group $G$ is
\begin{equation}
\mathscr{O}_G(\bx) := G\cdot \bx=\{g\cdot \bx:\ g\in G\},
\end{equation}
i.e., all transformations of $\bx$ under $G$. The stabilizer (isotropy subgroup) of $\bx$ is $ G_{\bx} := \{g\in G:\ g\cdot \bx = \bx\} $.
When $G$ is finite, the orbit-stabilizer theorem gives $|\mathscr{O}_G(\bx)| = |G|/|G_{\bx}|$.
The \emph{quotient space} $\bar{\mathcal{X}} \coloneqq \mathcal{X}/G$ is the set of all orbits, with orbit map $\pi_G:\mathcal{X}\to \bar{\mathcal{X}}$, $\bx\mapsto \mathscr{O}_G(\bx)$. We use $\mathscr{O}_G(\bx)$ consistently to denote an orbit, and we use $[\bx]_G$ only as shorthand when needed.
When a canonical representative of an orbit is well defined, we denote it by $\bar{\bx}\in\mathscr{O}_G(\bx)$.

Two brief illustrative examples follow. For rotations, let $G=SO(2)$ act on images $\mathcal{X}=\mathbb{R}^{H\times W}$ by in-plane rotation about the image center. The orbit $\mathscr{O}_G(\bx)=\{g \cdot \bx: g \in SO(2)\}$ collects all rotated versions of $\bx$. Note that a class label is typically invariant to this action.
For uniform scalings, let $G=\mathbb{R}_{+}$ (under multiplication) act elementwise by $g\cdot \bx = g\,\bx$.
The orbit $\mathscr{O}_G(\bx)=\{g\,\bx: g\in \mathbb{R}_{+}\}$ contains brightness or intensity rescalings of the same sample, for example due to illumination or exposure changes.
These concrete group actions show how multiple transformed inputs can represent a single semantic instance. In both cases, reasoning on the quotient space $\mathcal{X}/G$ avoids redundancy and enhances sample efficiency by effectively unifying all transformed inputs into equivalence classes.

\subsubsection{Use in statistical learning}\label{subsec:stat-use}

From a statistical learning standpoint, a symmetry group $G$ often represents transformations under which the semantic content of $\bx$ is unchanged. 
This is related to probabilistic symmetries \cite{bloemreddy2020probabilistic}, where invariance is imposed at the level of conditional distributions. Recent work also demonstrated how invariance can improve statistical efficiency by means of data augmentation, which can reduce the variance of estimators under invariant distributions \cite{chen2020group}.
Besides data augmentation \cite{chen2020group}, other established methods for handling symmetries in statistical learning include explicit regularization and invariance-aware representation learning~\cite{shorten2019survey,marchetti2023equivariant,anselmi2019symmetry,shakerinava2022structuring}, as well as neural network architecture designs that guarantee invariance~\cite{cohen2016group,zaheer2017deepsets,bronstein2021geometric,weiler2024gauge}.

Formally, a classifier (or model) $f:\mathcal{X}\to\Delta_C$ is \emph{$G$-invariant} if it assigns the same outputs to all transformed inputs, that is,
$f(\bx) = f(g\cdot \bx)$ for all $g\in G$.
Consequently, the assigned labels are invariant under the group action. 
More generally, a statistical learning problem is $G$-invariant when the conditional distribution of outputs remains unchanged under group actions:
\begin{definition}[Label invariance]\label{def:label-invariance}
We say that $p(Y\mid\bx)$ is $G$-invariant if, for all $g\in G$,
\begin{equation} \label{eq:label-invariance-eq}
    p(Y\mid\bx) = p\bigl(Y\mid g\cdot \bx\bigr). \end{equation}
\end{definition}
\begin{remark} 
Deterministic classifiers are included by taking $p(Y\mid\bx) = \delta_{Y-\hat{y}(\bx)}$. Here $\delta_{a-b}$ equals $1$ if $a=b$ and $0$ otherwise, and $\hat{y}(\bx)$ is the predicted label for input $\bx$.
\end{remark}

Under the label invariance assumption, data augmentation can be mathematically modeled by sampling transformations $g$ from a probability measure $\bbQ$ supported on the group $G$, and training on transformed pairs $(g \cdot \bx_i, y_i)$, for $g\sim \bbQ$ and $(\bx_i, y_i)\sim p(\bX,Y)$. When $G$ is compact, $\bbQ$ can be taken to be the normalized Haar measure on $G$, i.e., the unique measure satisfying left and right invariance, $\bbQ(gA)=\bbQ(A)$ and $\bbQ(Ag)=\bbQ(A)$ for all $g\in G$ and measurable $A\subseteq G$.
Sampling $g\sim\bbQ$ corresponds to drawing a group transformation ``uniformly''.

For $G=SO(2)$, $\bbQ$ is the uniform distribution over angles. For finite groups such as the cyclic rotation group $C_4=\{0^\circ,90^\circ,180^\circ,270^\circ\}$, it is the discrete uniform measure. When $G$ is non-compact, such as arbitrary scalings, no finite Haar probability measure exists. In such cases, one can either consider the Haar measure on a compact subgroup, or define an application-specific measure that reflects the desired variability.
This probabilistic view makes augmentation an expectation over group actions under $\bbQ$, which motivates the \emph{per-sample orbit-averaged loss}~\cite{chen2020group}:
\begin{align}
    \mathcal{L}^{\bbQ}(\bx, y, f_{\bw})
    & := \bbE_{\bbQ(g)}\bigl[\mathcal{L}\bigl(f_{\bw}(g\cdot \bx), y\bigr)\bigr]
    \nonumber \\
    & \,= \int_G \mathcal{L}\bigl(f_{\bw}(g\cdot \bx), y\bigr)\, d\bbQ(g),
    \label{eq:orbit-averaged-loss}
\end{align}
which can be interpreted as averaging along the orbit of $\bx$ induced by $G$. In implementations, 
this expectation is approximated via a Monte Carlo average over sampled transformations.
This group-based perspective will be instrumental for our construction of group-invariant coresets in quotient spaces, where samples differing only by a transformation in $G$ are treated under a single equivalence class.

\section{Proposed approach: group-invariant coresets and AL for data efficiency}\label{sec:proposed_approach}

The fundamental idea of our \emph{group-invariant coreset} framework is to account for symmetries in the data (induced by a group $G$) during the selection process. This improves data efficiency compared to standard criteria~\cite{Gonzalez1985,sener2018coresets} when redundant samples are present.
To this end, we redefine the unit of selection to a geometric object in the form of \emph{orbits} $\mathscr{O}_G(\bx)$, which naturally incorporates group information.

\begin{figure}[t]
    \centering
    \includegraphics[width=1.00\linewidth]{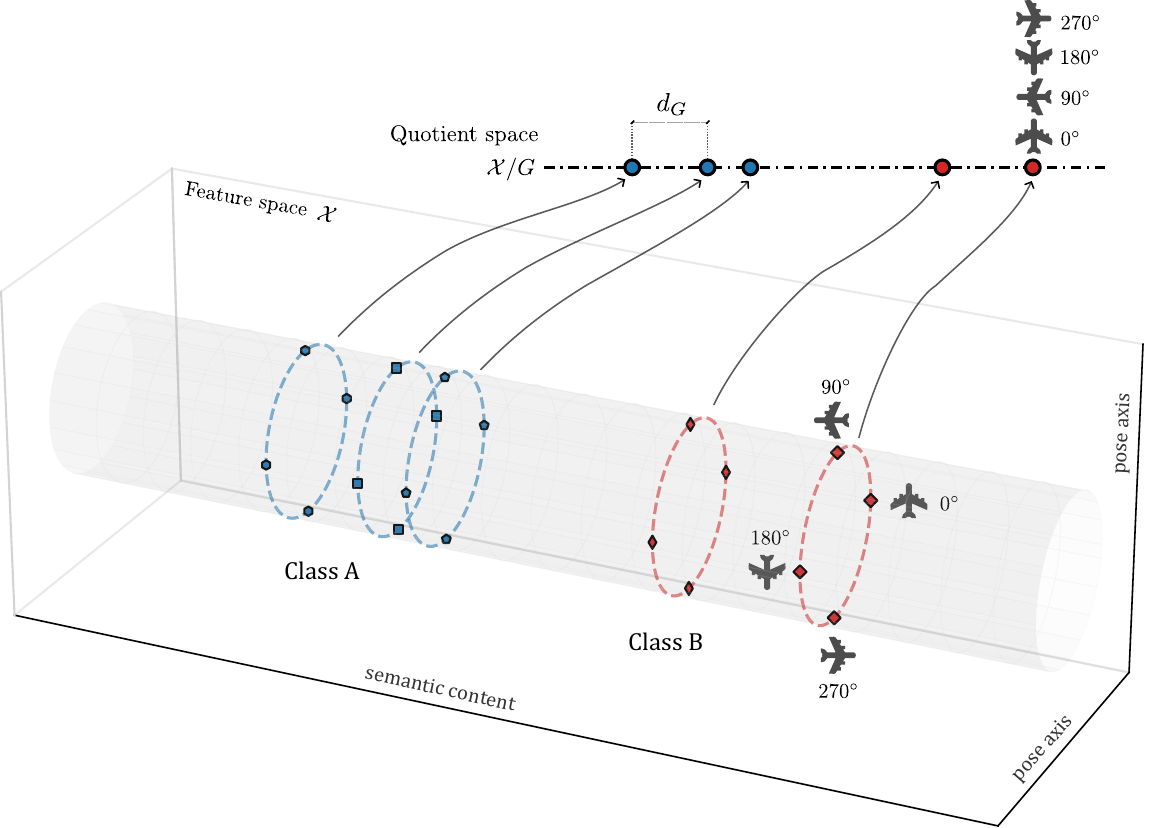}
    \caption{Illustrative example. In the input space $\calX$, one axis corresponds to semantic content (i.e., distinct classes) and the other two to pose information (rotation angle) induced by a group $G$. Each point $\bx\in\calX$ (such as an airplane) generates an orbit $\mathscr{O}_G(\bx)$ as it undergoes transformations by $G$ (shown as colored dashed loops). The quotient space $\calX/G$ captures only the semantic content, mapping each orbit to a single point. A quotient metric $d_G$ measures a distance between two orbits, being invariant to pose information.}
    \label{fig:concept-invariant}
\end{figure}

Figure~\ref{fig:concept-invariant} illustrates this idea with a simple example based on the rotation group $G=SO(2)$, with the input space $\calX$ represented as a cylinder in 3D space, one axis corresponding to semantic content (defining the class) and the other two to pose information (rotation angle). Each point $\bx\in\calX$ (e.g., an airplane) generates a continuous trajectory of points as it undergoes transformations by $G$, which corresponds to the orbit $\mathscr{O}_G(\bx) = \{g \cdot \bx : g \in G\}\subset\calX$ containing the set of transformed examples (shown as colored dashed loops). 
The pose axes can be ``collapsed'' by mapping the input space $\calX$ to the \emph{quotient space} $\calX/G$, which captures only the semantic content. In this representation, each orbit $\mathscr{O}_G(\bx)$ is mapped to a single point in $\calX/G$ by a projection~$\pi_G$.

Considering a standard distance-based selection criterion operating in $\calX$ immediately reveals potential issues, as an airplane at $0^\circ$ and its $180^\circ$ rotation might be treated as far from one another despite being semantically identical.
Therefore, we consider a \emph{quotient metric} $d_G$ that measures the distance between two orbits, effectively formulating the selection criterion in the quotient space $\calX/G$. In the illustration, this corresponds to the distance between loops, which captures semantic content while being invariant to pose alignment. This ensures that the AL oracle needs to supply only a single label per orbit, thereby concentrating the labeling budget on task-related information.
This will be formalized in detail in Section~\ref{subsec:group-invariant-coresets}.

Following this idea, the proposed selection objective is a coreset $\mathcal{C}_{\calX/G}$ on the quotient-space, which we call \textbf{GRINCO} (standing for GRoup-INvariant COreset).
Denoting by $\mathcal{D}_U=\{\bx_n\}_{n=1}^{N_U}$ the unlabeled pool\footnote{We index the samples $\bx_n$ starting from $n=1$ for convenience.}, each sample $\bx_i$ induces an orbit $\mathscr{O}_i \coloneqq \mathscr{O}_G(\bx_i)$ under $G$. 
The key idea behind GRINCO is that the coreset is constructed over orbits rather than in the input space, thus, by selecting a set of representative indices $i_1,\dots,i_K\in\{1,\dots,N_U\}$, the coreset can be expressed as:
\begin{equation}
    \mathcal{C}_{\calX/G} = \{\mathscr{O}_{i_1}, \dots, \mathscr{O}_{i_K}\} \,,
\end{equation}
where $K\ll N_U$ is the coreset size. Note that, in general, the set of indices $i_1,\dots,i_K$ is not unique: if two samples $\bx_n$ and $\bx_m$ belong to the same orbit, then $\mathscr{O}_{i_n}=\mathscr{O}_{i_m}$. Therefore, the selected indices depend on the choice of a so-called \emph{representative} sample for each orbit, which will be discussed later.
This notation emphasizes that selection acts on equivalence classes. 
This formulation allows us to define geometric coreset objectives (e.g., $k$-centers, coverage maximization) in the quotient space. Thus, the choice of quotient metric $d_G$ is crucial, and is the first step of our method, which is fully formalized and detailed in Section~\ref{subsec:group-invariant-coresets}. In short, the distance $d_G$ can be computed either by directly aligning orbits (which can be computationally prohibitive) or, in a more efficient and practical approach by using \textit{orbit-separating invariant functions}, i.e., learned $G$-invariant maps that can separate orbits and give an efficient way to realize the quotient geometry. This will be detailed in the following subsection.

The coreset $\mathcal{C}_{\calX/G}$ is then integrated into an invariant AL framework (a pipeline summarizing the proposed method is shown in Figure~\ref{fig:framework}).
At each iteration, after a batch is selected on the quotient space, one representative sample from each orbit is supplied to the oracle $\mathcal{O}$ for labeling, ensuring labeling budget is not wasted on multiple redundant copies within the same orbit.
Then, the classifier is re-trained with a properly designed orbit-averaged loss or invariant NN architecture, ensuring that selection, training, and prediction account for the data symmetries in a congruent way.
This is discussed in detail in Section~\ref{subsec:group-invariant-loss}.

\begin{figure}[htbp]
    \centering
    \includegraphics[width=1.0\linewidth]{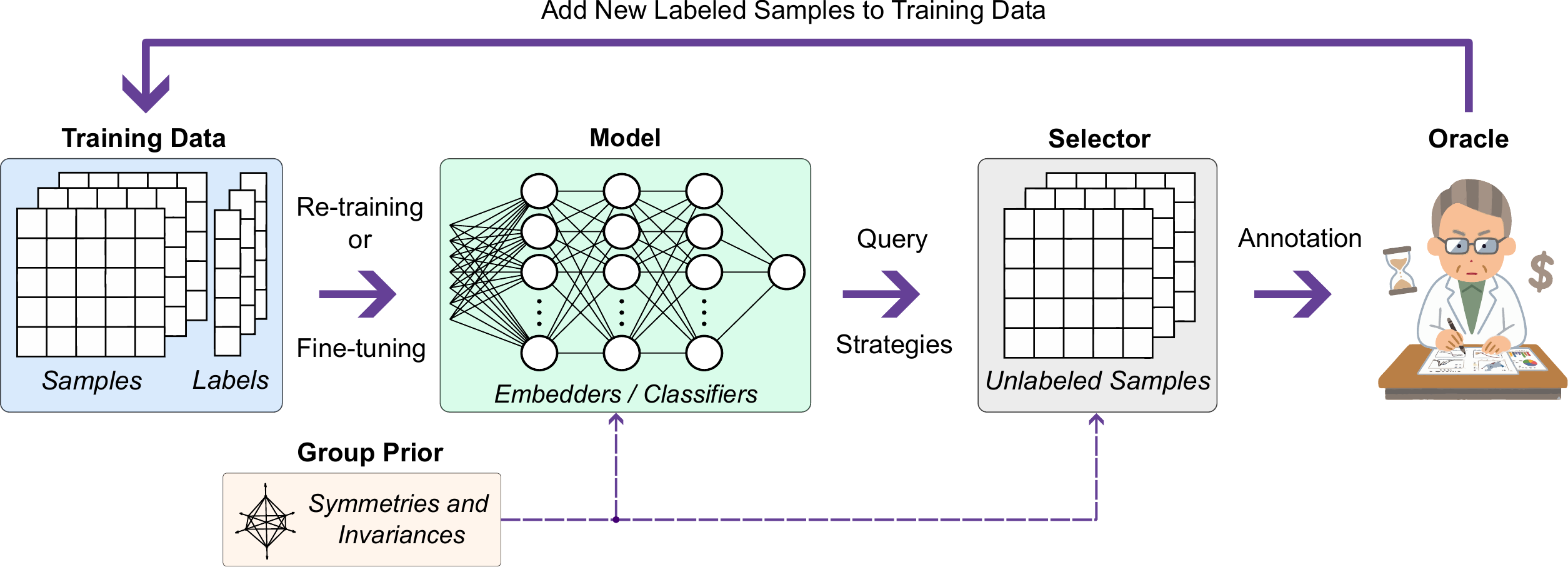}
    \caption{Overview of the \emph{group-invariant coreset} and AL pipeline.}
    \label{fig:framework}
\end{figure}

\subsection{Group-invariant Coresets}
\label{subsec:group-invariant-coresets}

\subsubsection{\textbf{Quotient space representation}}

Following the group-theoretic view of invariance, we therefore work on the quotient space $\mathcal{X}/G$ of the data under $G$ and formulate selection directly in the resulting quotient geometry. This way, orbits serve as natural summaries under invariance assumptions, which motivates operating on equivalence classes \cite{bloemreddy2020probabilistic,chen2020group} rather than raw samples for eliminating transformed duplicates in coreset construction.
The following definition makes the orbit-based equivalence precise.

\begin{definition}
\label{def:equivalence}
Given $\bx,\bx'\in\calX$, define the equivalence relation $\sim_G$ by writing $\bx\sim_G\bx'$ if there exists $g\in G$ such that $\bx=g\cdot \bx'$. 
\end{definition}

\begin{remark}
Definition~\ref{def:equivalence} is equivalent to the existence of $g,g'\in G$ such that $g\cdot\bx = g'\cdot\bx'$. If $g\cdot\bx = g'\cdot\bx'$, apply $g^{-1}$ on the left to obtain $\bx = (g^{-1}g')\cdot\bx'$, which matches~Definition~\ref{def:equivalence}.
\end{remark}

\begin{definition}
\label{def:orbit}
The orbit of a point $\bx\in\calX$ under the action of $G$ is the equivalence class of $\bx$ under the relation $\sim_G$, defined as $\mathscr{O}_G(\bx)=\{g \cdot \bx : g\in G\}$. The quotient $\mathcal{X}/G$ is defined as the set of orbits of $\calX$, that is, the equivalence classes $[\bx]_G$ modulo the equivalence relation $\sim_G$.
\end{definition}

\begin{figure}[t]
    \centering
    \includegraphics[width=1.0\linewidth]{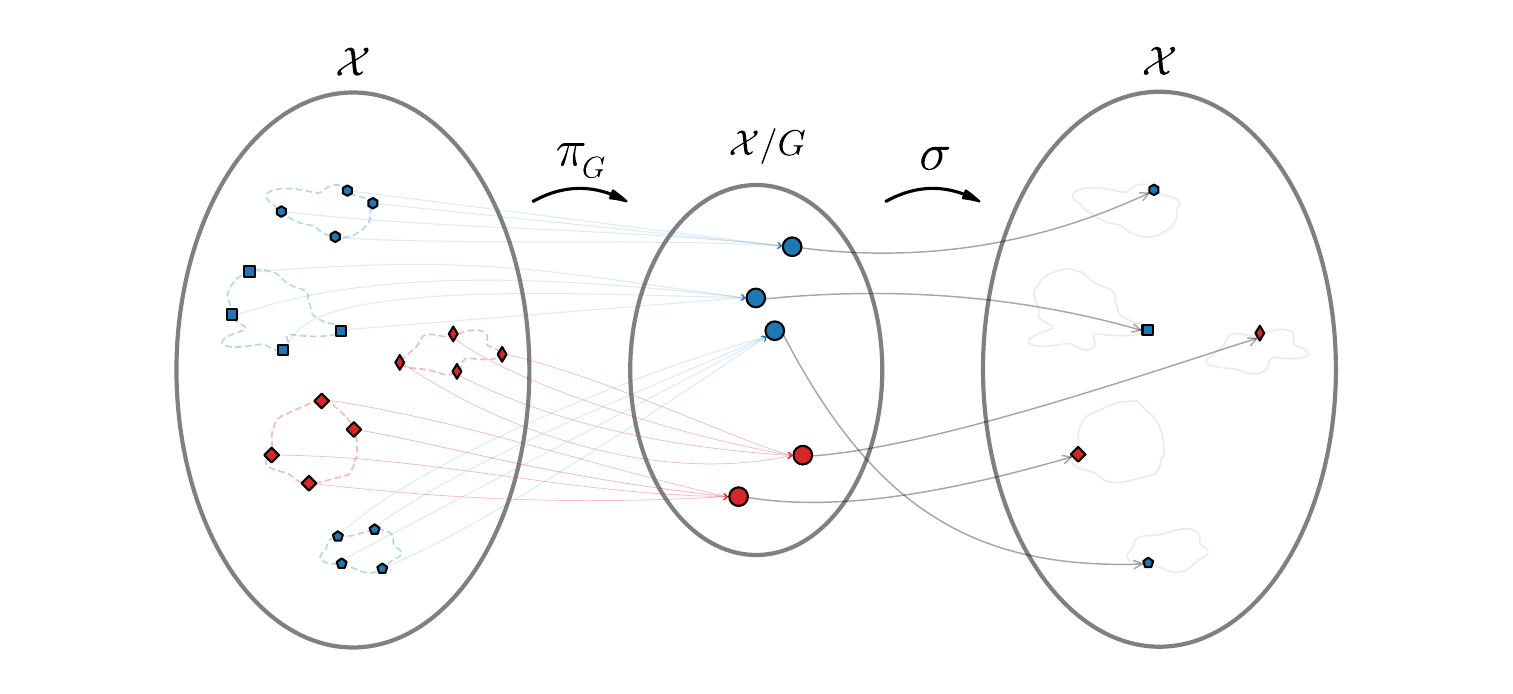}
    \caption{Visualizing the quotient mapping. Left: The input space $\mathcal{X}$ contains full orbits. Center: The projection $\pi_G$ maps each orbit to a single point in the quotient space $\mathcal{X}/G$. Right: The selector $\sigma$ maps each quotient point back to a unique canonical representative $\bar{\bx}$ in the input space.}
    \label{fig:mappings}
\end{figure}

The collection of orbits induced by the action of $G$ on $\calX$ yields a partition of $\calX$: each orbit $\mathscr{O}_G(\bx)$ is nonempty, any two distinct orbits are disjoint, and the union of all orbits equals $\calX$. Consequently the quotient set $\mathcal{X}/G$ is well defined, and there exists a projection mapping $\pi_G:\,\calX \to \mathcal{X}/G$, sending each point to its orbit/equivalence class. 
When a consistent choice of representative from each orbit is available (a canonical representative), the quotient may be identified with the set of those representatives; otherwise one works with orbits as described below.

\begin{remark}
For many choices of group actions (under additional regularity conditions) there may exist a canonical cross-section $\mathcal{S}\subseteq\mathcal{X}$ and a selector $\sigma:\mathcal{X}/G\to\mathcal{S}$ satisfying $\pi_G\big(\sigma(\mathscr{O}_G(\bx))\big)=\mathscr{O}_G(\bx)$ for every $\bx\in\mathcal{X}$. The canonical representative of $\bx$ is then $\bar{\bx}:=\sigma\big(\mathscr{O}_G(\bx)\big)\in\mathscr{O}_G(\bx)$. The composition $\sigma\circ\pi_G:\mathcal{X}\to\mathcal{S}$ maps each point to its chosen representative; it is constant on orbits and therefore generally not injective on $\mathcal{X}$ (injectivity holds only for the trivial action). 
\end{remark}

This is illustrated in Figure~\ref{fig:mappings}, where the left panel shows the raw input space $\mathcal{X}$, the center panel illustrates the quotient space $\mathcal{X}/G$ obtained via the projection $\pi_G$, and the right panel shows the selection of canonical representatives via the section~$\sigma$.
Note that once an orbit is included in the coreset during AL, the annotating oracle $\mathcal{O}$ only needs to label a single representative from that orbit, even if the pool contains multiple transformed copies (under the assumption that transformed samples share the same label--an assumption that will be slightly relaxed in a probabilistic setting later). A natural choice is to label the canonical representative $\bar{\bx}$ selected by $\sigma(\mathscr{O}_G(\bx))$. 

A key element of the framework is the definition of a suitable metric $d_G$ on the quotient space $\mathcal{X}/G$ that captures a relevant notion of similarity between orbits. This metric is essential for geometric coreset selection, and several constructions are possible.
In the following, we present two approaches: an idealized orbit-minimization distance, which is instructive but less practical to implement, and a more efficient one based on orbit-separating functions and invariant representations.

The following pseudo-distance inspired by \cite{bandeira2023estimationOrbitsInvariants} focuses on groups that act by isometries, for example, compact Lie groups with orthogonal representations.

\begin{definition}
\label{def:distance}
For $\bx,\bx'\in\calX$ and a group $G$ acting by isometries, define the ambient pseudo-distance
\begin{equation}\label{eq:dg-min}
\tilde d_G(\bx,\bx') \coloneqq \inf_{g\in G} \|\bx-g\cdot\bx'\|\ .
\end{equation}
\end{definition}

\begin{proposition}
If $G$ acts by isometries (i.e., $g\in G$ is an isometry w.r.t. the norm), then the ambient pseudo-distance can be written as
\begin{equation}
    \tilde d_G(\bx,\bx') \coloneqq \inf_{g,g'\in G} \|g\cdot\bx-g'\cdot\bx'\|\ .
\end{equation}
This pseudo-distance induces a metric on the quotient space $\mathcal{X}/G$ between orbits \cite{bandeira2023estimationOrbitsInvariants}, which we denote by $d_G\bigl(\mathscr{O}_G(\bx),\mathscr{O}_G(\bx')\bigr):=\tilde d_G(\bx,\bx')$. When $G$ is compact and the action is continuous, the infimum is attained.
\end{proposition}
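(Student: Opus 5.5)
The proof has four parts: the two-sided formula, well-definedness on orbits, the metric axioms, and attainment of the infimum. Throughout I use only the isometry property $\|g\cdot\bu-g\cdot\bu'\|=\|\bu-\bu'\|$ for all $g\in G$, together with the group axioms.

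\emph{Two-sided formula.} Call the right-hand side $D$. Taking $g=e$ in $D$ shows $D\le \tilde d_G(\bx,\bx')$. Conversely, for any $g,g'\in G$, isometry of $g^{-1}$ gives
\begin{equation*}
\|g\cdot\bx-g'\cdot\bx'\|=\|\bx-(g^{-1}g')\cdot\bx'\|\ge \tilde d_G(\bx,\bx').
\end{equation*}
Taking the infimum over $g,g'$ yields $D\ge\tilde d_G(\bx,\bx')$, so $D=\tilde d_G(\bx,\bx')$.

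\emph{Well-definedness on the quotient.} The two-sided form is invariant under replacing $\bx$ by $h\cdot\bx$ and $\bx'$ by $h'\cdot\bx'$. Indeed, this only reparametrizes $(g,g')\mapsto(gh,g'h')$, which is a bijection of $G\times G$. Hence $d_G$ depends only on the orbits $\mathscr{O}_G(\bx)$ and $\mathscr{O}_G(\bx')$.

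\emph{Metric axioms.} Nonnegativity is immediate. For symmetry, $\|\bx-g\cdot\bx'\|=\|g^{-1}\cdot\bx-\bx'\|=\|\bx'-g^{-1}\cdot\bx\|$, and $g\mapsto g^{-1}$ is a bijection of $G$. For the triangle inequality, fix $\varepsilon>0$ and choose $g_1,g_2$ with $\|\bx-g_1\cdot\bx'\|\le \tilde d_G(\bx,\bx')+\varepsilon$ and $\|\bx'-g_2\cdot\bx''\|\le\tilde d_G(\bx',\bx'')+\varepsilon$. Then
\begin{equation*}
\|\bx-g_1g_2\cdot\bx''\|\le\|\bx-g_1\cdot\bx'\|+\|g_1\cdot\bx'-g_1g_2\cdot\bx''\|.
\end{equation*}
By isometry of $g_1$, the second term equals $\|\bx'-g_2\cdot\bx''\|$. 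Letting $\varepsilon\to0$ gives the triangle inequality.

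\emph{Identity of indiscernibles and attainment.} This is the main obstacle: $d_G=0$ does not by itself force equal orbits, since it only says $\bx$ lies in the closure of $\mathscr{O}_G(\bx')$. In general, then, one obtains only a pseudometric on $\mathcal{X}/G$, unless orbits are closed. When $G$ is compact and the action is continuous, the map $g\mapsto\|\bx-g\cdot\bx'\|$ is continuous on a compact set. It therefore attains its infimum, which proves the last claim. In that case each orbit is the continuous image of a compact set, hence compact and closed. So $d_G=0$ gives some $g$ with $\bx=g\cdot\bx'$, i.e.\ $\bx\sim_G\bx'$, and $d_G$ is a genuine metric.

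For non-compact groups I would state the result as a metric on the space of orbit closures, or equivalently restrict to actions with closed orbits, citing \cite{bandeira2023estimationOrbitsInvariants} for this standard setting.
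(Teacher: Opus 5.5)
Your proof is correct. For the only part the paper actually argues, the two-sided formula, you use the same step: isometry of $g^{-1}$ turns $\|g\cdot\bx-g'\cdot\bx'\|$ into $\|\bx-(g^{-1}g')\cdot\bx'\|$.

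Beyond that, the paper cites \cite[Lemma~3.1]{hoyos2023diffusion} and \cite{bandeira2023estimationOrbitsInvariants} for the quotient-metric and attainment claims, whereas you verify them directly:
\begin{itemize}
\item reparametrizing $G\times G$ for well-definedness on orbits;
\item inversion for symmetry;
\item composing near-optimal $g_1,g_2$ for the triangle inequality;
\item continuity on a compact $G$ for attainment.
\end{itemize}

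The self-contained route also yields a correct sharpening of the statement. For an isometric action, $d_G(\mathscr{O}_G(\bx),\mathscr{O}_G(\bx'))=0$ if and only if $\bx\in\overline{\mathscr{O}_G(\bx')}$. So $d_G$ is a genuine metric on $\mathcal{X}/G$ exactly when orbits are closed, and otherwise only a pseudometric. For example, $\mathbb{Z}$ acting on $\mathbb{R}^2$ by an irrational rotation is isometric, but its orbits are dense in circles, so distinct orbits can be at distance zero. The proposition's ``metric'' claim therefore depends on compactness (or closed orbits), which is the setting of the cited reference and exactly where you place it.
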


\begin{proof}
    This result follows from \cite[Lemma~3.1]{hoyos2023diffusion}. Since $G$ acts by isometries, we have $\|\bx-\bx'\|=\|g\cdot\bx-g\cdot\bx'\|$. Applying $g^{-1}$ inside the norm implies the equivalence.
\end{proof}

Therefore, $\tilde d_G$ defines the metric between orbits in the quotient space $\mathcal{X}/G$, as required for our formulation. In the following, we will continue to use $d_G$ to denote the quotient metric between orbits, and $\tilde d_G$ for its equivalent ambient space pseudo-distance.
Examples of groups that act isometrically include the orthogonal group (acting by matrix multiplication) and the symmetric (permutation) group, although the latter is discrete rather than a Lie group.

\subsubsection{\textbf{Practical quotient metrics}}

Despite its simplicity, the ambient pseudo-distance in Definition~\ref{def:distance} is not applicable in general. When $G$ is non-compact, the infimum in \eqref{eq:dg-min} may not be attained. When $G$ is not isometric, $\tilde d_G$ is not independent of the choice of representative within the orbit, and thus does not define a quotient pseudo-distance.
Moreover, computing the solution to \eqref{eq:dg-min} can be computationally impractical for some groups (e.g., large permutation groups). 
While other canonical pseudo-distances can be defined on quotient spaces in full generality through minimizations over sequences \cite{burago2022courseMetricGeometryBook}, it is typically intractable to compute in practice. 
These issues motivate an alternative construction explained next. 

Our goal is to define a tractable distance on the quotient space $\mathcal{X}/G$ that is semantically meaningful and easy to compute. We thus define a quotient metric using the notion of \textit{orbit-separating invariants} \cite{nguyen2023learningSymmetrizationOrbitDistance,dym2024lowDimensionalInvariantEmbeddings}, i.e., functions $h:\calX\times\calX\to \mathbb{R}_+$ that (i) are $G$-invariant ($h(\bx)=h(g\cdot\bx)$ for all $g\in G$) and (ii) separate orbits:
\begin{equation} 
    \label{eq:orbit_separating_inv_i}
    h(\bx)\neq h(\bx') \quad\Longleftrightarrow\quad \mathscr{O}_G(\bx)\neq \mathscr{O}_G(\bx') \,.
\end{equation}

When such an $h$ is available, a practical quotient distance is obtained from the Euclidean distance in the image of $h$:
\begin{equation}\label{eq:d_G_psi}
    d_{G,h}(\bx,\bx') \coloneqq \|h(\bx)-h(\bx')\| \,.
\end{equation}
Various definitions of $d_{G,h}(\bx,\bx')$ are possible.
For simple cases, a closed-form canonicalizer can be used, that is, if every orbit $\mathscr{O}_G(\bx)$ has a unique canonical representative $\sigma(\bx)=\bar\bx$, the map $h(\bx)=\bar\bx$ is an orbit-separating invariant, being $G$-invariant by construction and satisfying the condition in \eqref{eq:orbit_separating_inv_i}.
For example, $h(\bx)=\bx/\|\bx\|$ is a canonicalizer for positive rescaling, thus $d_{G,h}(\bx,\bx') = \|h(\bx)-h(\bx')\|$. 

Another appealing option is to consider an orbit-separating function $h_\psi$ as a neural network with parameters $\psi$, learned from unlabeled data using invariant representation learning \cite{benton2020learning}.
In that case, $h_\psi$ is trained to be $G$-invariant while separating distinct orbits, and the induced distance provides a useful quotient metric for sample selection. 
Detailed constructions and learning-based methods for orbit-separating invariants (including polynomial and symmetrization approaches for groups such as the symmetric, orthogonal and general linear groups) are discussed in \cite{nguyen2023learningSymmetrizationOrbitDistance,dym2024lowDimensionalInvariantEmbeddings}.
Practical implementations can also leverage architectures with built-in invariance \cite{cohen2016group,weiler2024gauge}, group-invariant autoencoders \cite{winter2022unsupervised} or augmentation-based representation learning \cite{gidaris2018unsupervised,chen2020simclr,caron2021dino}.
Throughout the paper (including experiments) $d_{G,h}$ is the practical surrogate that replaces the ideal quotient metric $d_G$ and its equivalent ambient space pseudo-distance $\tilde d_G$.

\subsubsection{\textbf{Group-invariant coresets with quotient k-centers selection}}
\label{sec:group-invariant-k-centers}

Based on the quotient space representation presented in the previous subsection, we can formulate the coreset selection directly on the quotient space $\mathcal{X}/G$ to eliminate redundancy by using a geometric criterion.
While many coreset objectives are possible, we focus here on the $k$-center objective \cite{Gonzalez1985,sener2018coresets} due to its simplicity and popularity in AL. The proposed framework can also be extended to other geometric criteria (e.g., geometric medians, probabilistic sampling or other coverage maximization criteria), the key difference being the use of the quotient metric $d_G$ which ensures the selection is performed among equivalence classes.

Let $\mathcal{U}=\{1,\ldots,N_U\}$ denote the index set of the unlabeled pool $\mathcal{D}_U$. We formulate the $G$-invariant $k$-centers-based GRINCO coreset selection as follows.

\begin{definition}[$k$-centers GRINCO]
A $G$-invariant $k$-center coreset is defined as the set of orbits that solve
\begin{equation}\label{eq:kcovers-quotient}
    \calC_{\rm idx} \,\in\,\, \mathop{\arg\min}_{\substack{S\subseteq \mathcal{U}\\ |S|=K}}\;
    \max_{n\in\mathcal{U}}\;
    \min_{s\in S}\; d_G\bigl(\mathscr{O}_G(\bx_n),\mathscr{O}_G(\bx_s)\bigr).
\end{equation}
This corresponds to the $k$-center objective in the quotient space $\mathcal{X}/G$.
\end{definition}

Here $\calC_{\rm idx}$ denotes the indices of the samples that serve as representatives for the selected orbits in the unlabeled pool, and the coreset itself is given by $\calC_{\calX/G} = \{\mathscr{O}_G(\bx_s) : s\in \calC_{\rm idx}\}$. Note that problem \eqref{eq:kcovers-quotient} might have multiple optimal solutions corresponding to different choices of representatives for the same set of orbits.

\begin{remark}
The $k$-centers GRINCO objective in \eqref{eq:kcovers-quotient} can also be written equivalently using the ambient pseudo-distance $\tilde d_G$ in Definition~\ref{def:distance}, or the orbit-separating invariant-based distance $d_{G,h}$ in \eqref{eq:d_G_psi}. Since these choices of distance all define a quotient geometry, they all lead to orbit selection criteria.
\end{remark}

To solve \eqref{eq:kcovers-quotient}, a practical approach is to adopt a greedy $k$-center algorithm (farthest-first traversal) using one of the aforementioned quotient metrics, such as $d_{G,h}$ (or $\tilde{d}_G$ if tractable) as the distance function. This procedure, which we refer to as \emph{Orbit-$k$-center}, iteratively selects the orbit maximizing the minimum quotient distance to the current coreset.

Importantly, downstream machine learning methods (classifier, regressor or clustering routine) should be made $G$-invariant as well to fully benefit from an invariant coreset. This will be discussed in the context of AL in the following. Otherwise, the model cannot distinguish which discarded samples were redundant under $G$.

\subsection{Efficient AL with group invariant coresets}\label{subsec:group-invariant-loss}

\subsubsection{Active learning with GRINCO}\label{subsec:slim-learning}

Let us denote the full dataset by $\mathcal{D}_{\mathrm{full}}=\{(\bx_n,y_n)\}_{n=1}^{N}$, containing the labels $y_n$ for all samples.
The action of $G$ partitions the observed indices into $\{\mathcal{I}_i\}_{i=1}^M$, $\mathcal{I}_i\subset \{1,\ldots,N\}$, where $M$ is the number of distinct orbits in the dataset. In particular, $\mathscr{O}_i$ denotes the orbit containing the samples indexed in $\mathcal{I}_i$ such that $\bx_n$ and $\bx_{n'}$ lie in the same orbit $\mathscr{O}_i$ if and only if $n,n'\in\mathcal{I}_i$. We also denote by $\bar{\bx}_i := \sigma(\mathscr{O}_i)$ the canonical representative of orbit $\mathscr{O}_i$.
In this case, $\mathcal{D}_{\mathrm{full}}$ can be partitioned over orbits as:
\begin{equation} \label{eq:orbit-dataset-partitioning}
\mathcal{D}_{\mathrm{full}} \;=\; \bigsqcup_{i=1}^{M} \big\{ (\bx_n,\, y_n) \big\}_{n\in\mathcal{I}_i} .
\end{equation}

We can define a measure of the orbit size in the dataset by introducing $\alpha_i := |\mathcal{I}_i|/\sum_{j=1}^{M} |\mathcal{I}_j|$ which represents the proportion of samples in orbit $\mathscr{O}_i$, with $\sum_{j=1}^{M} \alpha_j=1$. If we retain one representative for each measured orbit and attach to it a queried label $y_i$, we obtain the \emph{orbit summarized} dataset
\begin{equation} \label{eq:slim-dataset-summarization}
    \mathcal{D}_{\mathrm{orbit}} \;=\; \big\{\, (\bar{\bx}_i,\, y_i,\, \alpha_i) \,\big\}_{i=1}^{M} \,.
\end{equation}

Based on a set of representative samples $\{\bar{\bx}_i\}_{i=1}^K$ selected by GRINCO\footnote{With a slight abuse of notation, we use the same notation to index the representatives $\bar{\bx}_i$ in the orbit summarized dataset $\mathcal{D}_{\mathrm{orbit}}$ and in the coreset.} according to~\eqref{eq:kcovers-quotient}, we also consider a more general \emph{weighted representative coreset} 
\begin{align}
    \mathcal{C} = \big\{\, (\bar{\bx}_i,y_i,w_i) \, \big\}_{i=1}^{K} \,,
\end{align}
where $K$ is the coreset size and $w_i>0$ is the weight of its $i$-th element. Different approaches can be used to define $w_i$; we compute it as $w_i=\textsf{nno}(\bar{\bx}_i)/N$, where $\textsf{nno}(\bar{\bx}_i)$ denotes the number of samples in $\mathcal{D}_{\mathrm{full}}$ whose closest orbit is $[\bar{\bx}_i]_G$. This gives a measure of the contribution of $[\bar{\bx}_i]_G$ in the full dataset.
Note that in the special case $K=M$ in which one representative is retained for every measured orbit, we have that $w_i=\alpha_i=|\mathcal{I}_i|/N$.
It is instructive to first consider the ideal case of deterministic invariance, where all transformed versions of a given input share the same label. In that case, training could be performed on the orbit summarized dataset with one label per orbit without changing the empirical objective, as illustrated in the following.
\begin{example}[Deterministic invariance]\label{ex:deterministic-invariance}
Suppose that all samples in each orbit share the same label, that is, $y_n = y_{n'}$ whenever $n,n'\in\mathcal{I}_i$. If $f_{\bw}$ is $G$-invariant by construction, i.e., $f_{\bw}(g \cdot \bx)=f_{\bw}(\bx)$ for all $g\in G$, then
\begin{align}
    \frac{1}{N}\sum_{(\bx_n,y_n)\in \mathcal{D}_{\mathrm{full}}} \mathcal{L}\big(f_{\bw}(\bx_n),\, y_n\big)
    &= \frac{1}{N}\sum_{i=1}^{M}\;\sum_{n\in\mathcal{I}_i} \mathcal{L}\big(f_{\bw}(\bx_n),\, y_i\big) 
    \nonumber \\
    & \hspace{-4.5ex}= \sum_{(\bar\bx_i,y_i,\alpha_i)\in \mathcal{D}_{\mathrm{orbit}}} \alpha_i \,\mathcal{L}\big(f_{\bw}(\bar{\bx}_i),\, y_i\big)
    \nonumber
\end{align}
where we used the fact that $f_{\bw}(\bx_n)=f_{\bw}(\bar{\bx}_i)$ for all $n\in\mathcal{I}_i$.
This shows that minimizing the empirical risk over the full dataset $\mathcal{D}_{\mathrm{full}}$ is equivalent to minimizing a weighted risk over the canonical representatives contained in the orbit summarized dataset $\mathcal{D}_{\mathrm{orbit}}$ with a single representative per orbit.
\end{example}

Note that Example~\ref{ex:deterministic-invariance} assumes that all samples in each orbit share the same label, which may not hold in practice due to label noise or imperfect invariance. To address this limitation, we will consider a more general probabilistic invariance framework based on Definition~\ref{def:label-invariance}, leveraging orbit-averaged losses (as in \eqref{eq:orbit-averaged-loss}) to accommodate stochastic labels within orbits while still enabling efficient learning with weighted representative coresets produced by GRINCO.

\subsubsection{A statistical invariance-aware learning objective}
We now cast the invariance-aware objective in a statistical framework. Let $\bbQ$ denote a probability measure on $G$, which will be used to promote invariance during training (e.g., by using data augmentation with random transformations sampled from $\bbQ$~\cite{chen2020group}), in such a way as to promote $G$-invariance in both the quotient-space coreset selection metric $d_{G,h}$ and the classification model $f_{\bw}$\footnote{When available, a classifier with an architecture that is group invariant by design can be equivalently used to the same effect.}. 
Coupling the $G$-invariance assumptions used in coreset selection and in AL model training is important to achieve the desired performance.

We define the orbit-averaged population risk as the expectation (with respect to $p(\bX,Y)$) of the per-sample orbit-averaged loss $\mathcal{L}^{\bbQ}(\bx,y,f_{\bw})$ as
\begin{align}\label{eq:loss-orb-full}
    \mathscr{R}^{\bbQ}(f_{\bw})
    &\,\coloneqq\, \bbE_{p(\bX,Y)}\bigl[\mathcal{L}^{\bbQ}(\bX,Y,f_{\bw})\bigr]
    \nonumber \\
    &\,\,\,=\, \bbE_{p(\bX,Y)}\, \bbE_{\bbQ(g)}\, \bigl[\mathcal{L}\bigl( f_{\bw}(g\cdot \bX),\, Y \bigr)\bigr]\,.
\end{align}

\begin{remark}
If $f_{\bw}$ is $G$-invariant by construction, the orbit-averaged loss reduces to the standard loss: $\mathcal{L}^{\bbQ}(\bx,y,f_{\bw}) = \mathcal{L}(f_{\bw}(\bx),y)$, and so does the risk, $\mathscr{R}^{\bbQ}(f_{\bw})=\mathscr{R}(f_{\bw})$.
\end{remark}

The outer expectation in the orbit-averaged risk in \eqref{eq:loss-orb-full} can be approximated empirically using the samples in the weighted coreset $\mathcal{C}$ as follows:
\begin{equation} \label{eq:orbit-averaged-risk-coreset}
    \hat{\mathscr{R}}^{\bbQ}_{\mathcal{C}}(f_{\bw})
    \,\coloneqq\, \sum_{i=1}^{K} w_i\, \bbE_{\bbQ(g)} \mathcal{L}\bigl(f_{\bw}(g\cdot \bar{\bx}_i), y_i\bigr).
\end{equation}
The orbit-averaged loss in \eqref{eq:orbit-averaged-risk-coreset} along with the weighted representative coreset will then be used in a batch AL iteration pipeline, which is described in more detail in Section~\ref{subsec:invar-batch-al}.

In practice, the minimization of $\hat{\mathscr{R}}^{\bbQ}_{\mathcal{C}}(f_{\bw})$ can be performed with a Monte Carlo approximation for the expectation $\bbE_{\bbQ(g)}$, using data augmentation with random transforms drawn from $\bbQ$ during training \cite{chen2020group}. Other approaches include the use of constrained optimization with penalties \cite{manolache2025learning} that promote invariance.

Note that for the empirical orbit-averaged risk \eqref{eq:orbit-averaged-risk-coreset}, equivalence between training on the full dataset and training on the weighted coreset does not hold as in Example~\ref{ex:deterministic-invariance} due to stochasticity in the labels or inexact invariance. 
The next subsection addresses this difficulty through a theoretical analysis that formalizes how coverage in a quotient metric controls the generalization error when training on the weighted coreset~$\mathcal{C}$.


\subsection{Generalization}
\label{subsec:generalization}

We now analyze the excess risk incurred by selecting a group-invariant weighted representative coreset $\mathcal{C}$ with coverage radius $\varepsilon$ and training with the orbit-averaged loss under the probabilistic invariance from~\eqref{eq:label-invariance-eq}, compared to the empirical risk that would be obtained by training on the full dataset, which is defined by
\begin{align} \label{eq:orbit-averaged-risk-full}
    \hat{\mathscr{R}}_{\mathrm{full}}^{\bbQ}(f_{\bw}) & =\frac{1}{N}\sum_{n=1}^N \bbE_{g\sim\bbQ}\mathcal{L}\big(f_{\bw}(g\cdot\bx_n),y_n\big) \,.
\end{align}
The result separates the generalization error into concentration effects, and coreset-specific approximation errors driven by label uncertainty and coverage in the quotient space under $d_{G}$. It is formalized in the following theorem.

\begin{theorem}[$G$-invariant coreset generalization] \label{thm:generalization_gcoresets}
Let $\mathcal{D}_{\mathrm{full}}=\{(\bx_n,y_n)\}_{n=1}^N$ be the full dataset and $\mathcal{C}=\{(\bar{\bx}_i,y_i,w_i)\}_{i=1}^K$ a group-invariant weighted representative coreset. Consider the orbit-averaged empirical risk $\hat{\mathscr{R}}_{\mathcal{C}}^{\bbQ}$ defined analogously to \eqref{eq:orbit-averaged-risk-coreset}, and suppose that:
\begin{itemize}
    \item[A1:] The conditional label distribution is $G$-invariant, i.e., $p(Y|\bX)=p(Y|g\cdot \bX)$ for all $\bX$. 
    \item[A2:] The loss is bounded, i.e., $\sup_{\boldsymbol{p}\in\Delta_C,\, y\in\mathcal{Y}} \mathcal{L}(\boldsymbol{p},y) \le L_{\max}$, and both the labeling function and loss function are Lipschitz continuous with respect to the quotient pseudo-metric $d_{G,h}$, that is, there exist constants $L_{p}, \, L_{\mathcal{L}}$ such that for any $c\in\mathcal{Y}$ and $\bx,\bx'\in\mathcal{X}$:
    \hspace*{-\leftmargin}%
    \makebox[\linewidth]{%
    \begin{minipage}{\linewidth}
    \begin{align}
        & |p(Y=c|\bx) - p(Y=c|\bx')|  \leq L_{p} d_{G,h}(\bx,\bx')
        \label{eq:thm_a_lip_p}
        \\
        & \big|\bbE_{\bbQ(g)}\big\{\mathcal{L}(f_{\bw}(g\cdot \bx),c) - \mathcal{L}(f_{\bw}(g\cdot \bx'),c)\big\} \big|  \leq L_{\mathcal{L}} d_{G,h}(\bx,\bx')
        \label{eq:thm_a_lip_L}
    \end{align}
    \end{minipage}%
    }
    
    \item[A3:] The representative coreset $\calC$ is an $\varepsilon$-cover of $\mathcal{D}_{\mathrm{full}}$ in the quotient metric, i.e., $\max_{n\in[N]} \min_{i\in[K]} d_{G,h}(\bx_n,\bar{\bx}_i)\le\varepsilon.$

\end{itemize}
Then, with probability at least $1-\gamma$ over the draw of $\mathcal{D}_{\mathrm{full}}$, the generalization error of the representative coreset $\mathcal{C}$ satisfies
\begin{align}
    & \hspace{-5ex} \underbrace{\big|\hat{\mathscr{R}}_{\mathcal{C}}^{\bbQ}(f_{\bw}) - \mathscr{R}(f_{\bw})\big|}_{\text{coreset generalization error}}
    \,\,\leq\,\, \underbrace{\big|\hat{\mathscr{R}}_{\mathrm{full}}^{\bbQ}(f_{\bw}) - \mathscr{R}(f_{\bw})\big|}_{\text{full dataset generalization term}} \nonumber \\
    & \hspace{7ex}  + \underbrace{\sqrt{\frac{2 \ln(2/\gamma)}{N} V(\bbQ) }}_{\text{I}} + \underbrace{\frac{2\,L_{\max}\,\ln(2/\gamma)}{3N}}_{\text{II}} 
    \nonumber \\
    & \hspace{7ex} + \underbrace{2 L_{\max} \sum_{i=1}^{K} w_i \eta_i}_{\text{III}} + \underbrace{(L_{\mathcal{L}}  + C L_{p} L_{\max}) \varepsilon}_{\text{IV}} 
    \label{eq:thm_generalization_bound}
\end{align}
where $\mathscr{R}(f_{\bw}) =\bbE_{p(\bX,Y)}\big[\mathcal{L}(f_{\bw}(\bX), Y)\big]$ is the population risk, $\eta_i= 1-p(Y=y_i| \bar{\bx}_i)$ is the labeling uncertainty in the $i$-th representative, and $V(\bbQ)$ is the variance-related term:
\begin{align}
    V(\bbQ) {}={} & \bbE_{p(\bX)}\{\bbE_{\bbQ(g)}\{ \widetilde{\xi}(g\cdot \bX) \}\} + \mathrm{Var}_{p(\bX)}(\widetilde{\zeta}(\bX)) 
    \nonumber\\
    & - \bbE_{p(\bX)} \{\mathrm{Var}_{\bbQ(g)}(\widetilde{\zeta}(g\cdot \bX))\} \,.
    \label{eq:thm_variance_term}
\end{align}
\end{theorem}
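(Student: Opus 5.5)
The plan is to decompose the coreset generalization error with the triangle inequality into three pieces:
\begin{align*}
\big|\hat{\mathscr{R}}_{\mathcal{C}}^{\bbQ}-\mathscr{R}\big| \le{}& \big|\hat{\mathscr{R}}_{\mathrm{full}}^{\bbQ}-\mathscr{R}\big| + \big|\hat{\mathscr{R}}_{\mathrm{full}}^{\bbQ}-\bar{\mathscr{R}}_{\mathrm{full}}^{\bbQ}\big| \\
&+ \big|\bar{\mathscr{R}}_{\mathrm{full}}^{\bbQ}-\hat{\mathscr{R}}_{\mathcal{C}}^{\bbQ}\big|.
\end{align*}
Here $\bar{\mathscr{R}}_{\mathrm{full}}^{\bbQ}$ is an intermediate quantity in which each label $y_n$ is replaced by its conditional expectation. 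Concretely, $\bar{\mathscr{R}}_{\mathrm{full}}^{\bbQ}=\frac1N\sum_n \sum_{c}p(Y=c|\bx_n)\,\bbE_{\bbQ(g)}\mathcal{L}(f_{\bw}(g\cdot\bx_n),c)$. The first piece is kept as the ``full dataset generalization term''. The second piece is a concentration term and should produce I and II. The third piece is a deterministic approximation term and should produce III and IV.

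\textbf{Concentration (terms I, II).} Define $Z_n=\bbE_{\bbQ(g)}\mathcal{L}(f_{\bw}(g\cdot\bx_n),y_n)-\sum_c p(Y=c|\bx_n)\bbE_{\bbQ(g)}\mathcal{L}(f_{\bw}(g\cdot\bx_n),c)$. These are i.i.d. over the draw of $\mathcal{D}_{\mathrm{full}}$, centered, and bounded by $L_{\max}$ in absolute value by A2. I would apply the two-sided Bernstein inequality. With probability $1-\gamma$ it gives $|\frac1N\sum_n Z_n|\le\sqrt{2\sigma^2\ln(2/\gamma)/N}+2L_{\max}\ln(2/\gamma)/(3N)$, which is exactly the form of I+II. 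The remaining task is to identify $\sigma^2=\mathrm{Var}(Z_n)$ with $V(\bbQ)$. I would use the law of total variance, conditioning on $\bX$ and averaging over $g$. Let $\widetilde{\zeta}(\bx)$ denote the conditional mean loss at the transformed point and $\widetilde{\xi}$ the conditional second moment or variance over $Y$. Conditioning on $\bX$ then splits the variance into an $\bbE\bbE_{\bbQ}\widetilde{\xi}$ part, a $\mathrm{Var}_{p(\bX)}\widetilde{\zeta}$ part, and the subtracted intra-orbit $\bbE\,\mathrm{Var}_{\bbQ}\widetilde{\zeta}$ part. The subtraction arises because averaging over $g$ before squaring removes the intra-orbit variability, which is a Jensen-type identity $\mathrm{Var}(\bbE_{\bbQ}\cdot)=\bbE_{\bbQ}\mathrm{Var}-\ldots$. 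A1 is used here so that $p(Y|g\cdot\bX)=p(Y|\bX)$ and the label law can be pulled out of the $g$-average. I expect this step to be the main obstacle: getting the exact signs and definitions of $\widetilde{\xi},\widetilde{\zeta}$ to match \eqref{eq:thm_variance_term}. If an exact match fails, I would settle for $\mathrm{Var}(Z_n)\le V(\bbQ)$, which suffices for the bound.

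\textbf{Coreset approximation (terms III, IV).} Assign each $\bx_n$ to its nearest representative $\bar{\bx}_{i(n)}$. By the definition of the weights, $w_i=\textsf{nno}(\bar{\bx}_i)/N$, so $\hat{\mathscr{R}}_{\mathcal{C}}^{\bbQ}=\frac1N\sum_n \bbE_{\bbQ}\mathcal{L}(f_{\bw}(g\cdot\bar{\bx}_{i(n)}),y_{i(n)})$. For each $n$, I would insert the intermediate terms $\sum_c p(Y=c|\bar{\bx}_{i(n)})\bbE_{\bbQ}\mathcal{L}(f_{\bw}(g\cdot\bar{\bx}_{i(n)}),c)$ and $\sum_c p(Y=c|\bx_n)\bbE_{\bbQ}\mathcal{L}(f_{\bw}(g\cdot\bar{\bx}_{i(n)}),c)$, and bound three differences.

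First, the gap between the queried label $y_i$ and the conditional mixture at $\bar{\bx}_i$. Writing $\mathcal{L}(\cdot,y_i)-\sum_c p_c\mathcal{L}(\cdot,c)=\sum_{c\ne y_i}p_c(\mathcal{L}(\cdot,y_i)-\mathcal{L}(\cdot,c))$ bounds it by $2L_{\max}\eta_i$. Averaging with weights $w_i$ gives III.

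Second, changing $p(\cdot|\bar{\bx}_{i})$ to $p(\cdot|\bx_n)$. This costs at most $\sum_c L_p d_{G,h}\,L_{\max}\le CL_pL_{\max}\varepsilon$, by \eqref{eq:thm_a_lip_p}, the bound $\mathcal{L}\le L_{\max}$, and A3.

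Third, moving the loss argument from $\bar{\bx}_{i}$ to $\bx_n$. This costs $\sum_c p_c L_{\mathcal{L}}\varepsilon=L_{\mathcal{L}}\varepsilon$ by \eqref{eq:thm_a_lip_L} and A3.

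Summing the three differences gives III+IV. Combining everything completes the bound \eqref{eq:thm_generalization_bound}.
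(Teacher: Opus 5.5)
Your proposal is correct and follows the paper's proof essentially step for step. It uses the same intermediate risk with each $y_n$ replaced by its conditional law, Bernstein's inequality with range $L_{\max}$ plus a law-of-total-variance/Jensen bound on $\mathrm{Var}(Z)$ for terms I--II, and the Sener--Savarese-style insertion of intermediate terms (label noise at $\bar{\bx}_i$, Lipschitz change of $p(Y\mid\cdot)$, Lipschitz change of the orbit-averaged loss) under the nearest-representative assignment with $w_i=N_i/N$ for terms III--IV.

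On the step you flagged, the paper likewise only obtains an inequality, and your sign worry is moot. The paper's $\widetilde{\zeta}(\bx)$ is the $p(Y\mid\bx)$-mean of the centered loss $\mathcal{L}(f_{\bw}(\bx),Y)-\bbE_{p(Y'|\bx)}\mathcal{L}(f_{\bw}(\bx),Y')$, so it vanishes identically. Hence A1 gives $\bbE[Z\mid\bX]=0$, and Jensen over $g$ gives $\mathrm{Var}(Z)\le\bbE_{p(\bX)}\bbE_{\bbQ(g)}\widetilde{\xi}(g\cdot\bX)=V(\bbQ)$, which is exactly your fallback.
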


The proof of Theorem~\ref{thm:generalization_gcoresets} appears in Appendix~\ref{app:generalization-proof}, where the auxiliary quantities $\widetilde{\xi}$ and $\widetilde{\zeta}$ are also defined.

The bound in \eqref{eq:thm_generalization_bound} clarifies the behavior of the proposed AL strategy by separating finite-sample effects from coreset approximation effects. The first term on the right-hand side is the standard generalization gap for the full dataset under the orbit-averaged objective. It measures the deviation between the population risk and the empirical orbit-averaged loss when all $N$ samples are used, so it is independent of coreset selection.
Terms I and II quantify the concentration of the empirical full-dataset loss around its conditional expected-label counterpart that appears in the proof. 
Term I is the dominant variance-driven component that scales like $O(1/\sqrt{N})$. The variance constant $V(\bbQ)$ aggregates within-orbit and across-orbit variability through a law-of-total-variance decomposition and includes a negative correction that reduces variance when averaging over $\bbQ$ (in general, the larger the variance of $\bbQ$, the more significant the reduction will be). 
Term II is a boundedness correction from Bernstein's inequality~\cite{maurer2019bernsteinBoundedInteraction,sridharan2002gentleConcentrationIneq} and scales like $O(1/N)$, thus this term tends to be comparatively very small for practical values of $N$.

The third and fourth terms (III and IV) capture coreset-specific effects. Term III measures the expected label uncertainty at the selected representatives, weighted by the coreset weights $w_i$. It becomes zero when the oracle provides exact (deterministic) labels, and it can be large when the labeling uncertainty is significant and the size of the coreset $K$ is small. 
Term IV is the approximation error induced by approximating the loss over the coreset. It scales with the covering radius $\varepsilon$ and with the Lipschitz constants of the labeling function and the orbit-averaged loss. Coverage is measured in the quotient space $\mathcal{X}/G$ using the orbit-separating pseudo-distance $d_{G,h}$, so redundant within-orbit variations do not inflate the radius.

\subsection{Active learning pipeline with GRINCO}
\label{subsec:invar-batch-al}

We complete the proposed group-invariant coreset AL framework by integrating it into a pool-based loop: at each round, we compute a quotient-space coreset, query one representative per orbit, and retrain with an invariant or orbit-averaged loss (Algorithm~\ref{alg:grinco-light}). The acquisition rule $\mathcal{A}_{G}$ selects orbits in $\mathcal{D}_{U}$ by quotient-space coverage relative to the current labeled set $\mathcal{D}_{L}$, represented by the weighted representative coreset $\mathcal{C}$ (Section~\ref{subsec:group-invariant-loss}), using $\tilde{d}_G$ or $d_{G,h}$. In the orbit-$k$-centers instantiation, this is implemented by greedy farthest-first selection in the quotient space.

\begin{algorithm}[htbp]
\caption{GRINCO-Based AL Pipeline}
\label{alg:grinco-light}
\begin{algorithmic}[1]
\REQUIRE Unlabeled pool $\mathcal{D}_{U}$, initial labeled set $\mathcal{D}_{L}$, group $G$, group measure $\bbQ$, acquisition rule $\mathcal{A}_{G}$,
quotient distance $\tilde{d}_G$ or $d_{G,h}$, batch size $b$, label budget $B$.
\STATE Initialize classifier $f_{\bw}$ on $\mathcal{D}_{L}$ and decompose $\mathcal{D}_{L}$ as described in \eqref{eq:orbit-dataset-partitioning}--\eqref{eq:slim-dataset-summarization} to obtain the initial weighted representative coreset $\mathcal{C}$
\WHILE{budget is not exhausted and classification accuracy is not met}
\STATE Select $b$ new orbits $\{\mathscr{O}_i\}_{i=1}^b$ from $\mathcal{D}_{U}$ based on the GRINCO criterion in Section~\ref{sec:group-invariant-k-centers} (e.g., top-$b$ farthest-first traversal in the quotient space)
\STATE For each selected orbit $\mathscr{O}_i$, choose a representative $\bar{\bx}_i\in\mathcal{D}_{U}$ and query the oracle $\mathcal{O}$ for its label $y_i$
\STATE For every sample $\bx_n$ in $\mathcal{D}_{U}$ such that $\bx_n$ belongs to a selected orbit $\mathscr{O}_i$, remove $\bx_n$ from $\mathcal{D}_{U}$ and add $(\bx_n,y_i)$ to $\mathcal{D}_{L}$
\STATE Compute the weights $w_i$ as described in Section~\ref{subsec:group-invariant-loss} and update the weighted representative coreset $\mathcal{C}\leftarrow \mathcal{C} \bigcup\{(\bar{\bx}_i,y_i,w_i)\}_{i=1}^b$
\STATE Retrain the model $f_{\bw}$ on $\mathcal{C}$ using an invariant or orbit-averaged loss such as \eqref{eq:orbit-averaged-risk-coreset}
\ENDWHILE
\STATE \textbf{return} $f_{\bw}$ and the weighted representative coreset $\mathcal{C}$.
\end{algorithmic}
\end{algorithm}

\section{Experiments}\label{sec:experiments}

We evaluate the proposed group-invariant coreset framework in pool-based AL. Each experiment starts from an initial labeled set and a large unlabeled pool, then alternates training and batch acquisition until reaching a fixed label budget $B$, as in Algorithm~\ref{alg:grinco-light}. We first study a controlled scale-invariant synthetic dataset (Section~\ref{sec:exp:rays}), then image benchmarks with rotation invariance on CIFAR-10, STL-10, and MNIST (Section~\ref{sec:exp:image-benchmarks}).
We compare GRINCO with baseline methods in terms of classification accuracy and orbit efficiency on the selected coreset. All experiments were implemented in PyTorch and run on an NVIDIA RTX A4500 GPU with 16GB VRAM.

\subsection{Synthetic 2D scale-invariant rays dataset} 
\label{sec:exp:rays}

{We first consider a synthetic dataset with positive-rescaling invariance, so that each orbit is a ray from the origin. Labels depend only on the ray direction, not on the radius. This illustrative experiment compares random sampling, a Euclidean $k$-centers baseline, and GRINCO, showing reduced redundancy and improved label efficiency.}

\subsubsection{\textbf{Positive rescalings invariance model}}
\label{sec:rays:group}

Let $\mathcal{X}:=\mathbb{R}^2\setminus\{\boldsymbol{0}\}$ and $G:=\mathbb{R}_{+}$ act by positive rescaling, $g\cdot\bx=g\bx$ for $g\in\mathbb{R}_{+}$. The orbit of $\bx$ is the ray $\mathscr{O}_G(\bx)=\{g\cdot\bx:g\in\mathbb{R}_{+}\}$. A canonical orbit-separating map is the unit-norm projection
\begin{equation}
  h(\bx)\;:=\;\frac{\bx}{\|\bx\|_2}\in\mathbb{S}^1,
  \label{eq:rays:canonicalizer}
\end{equation}
where $\mathbb{S}^1:=\{\bx\in\mathbb{R}^2:\|\bx\|_2=1\}$ and $h(g\cdot\bx)=h(\bx)$ for all $g\in\mathbb{R}_{+}$. The quotient distance is $d_{G,h}(\bx,\bx')=\|h(\bx)-h(\bx')\|_2$, which is zero when $\bx$ and $\bx'$ lie on the same ray.

We sample $M$ distinct rays by drawing angles $\{\theta_i\}_{i=1}^M\subset[0,2\pi)$ and defining $\bu_i:=(\cos\theta_i,\sin\theta_i)\in\mathbb{S}^1$, with orbit $\mathscr{O}_i=\{r\bu_i:r\in\mathbb{R}_{+}\}$. For each orbit $i$, radii $\{r_{ij}\}_{j=1}^{\mu_i}$ are drawn i.i.d. from a log-uniform law, i.e., $\log r_{ij}\sim\mathrm{Unif}(\log r_{\min},\log r_{\max})$, and samples are $\bx_{ij}:=r_{ij}\bu_i$. The unlabeled pool is $\mathcal{D}_U=\{\{\bx_{ij}\}_{j=1}^{\mu_i}\}_{i=1}^M$ with $N:=|\mathcal{D}_U|=\sum_{i=1}^M \mu_i$, so large $\mu_i$ induce orbit redundancy. Labels are chosen to be scale-invariant (i.e., $y(g\cdot\bx)=y(\bx)$), and depend only on the orbit index (direction), not on radius:
\begin{equation}
  y_{ij}:=y_i,\qquad \forall j\in\{1,\dots,\mu_i\}.
  \label{eq:rays:labels}
\end{equation}
For a $4$-class demonstration, we use $M=C=4$ with rays separated by $90^\circ$ as $\theta_i=\pi/4+(i-1)\pi/2$, and $y_i=i$.

Note that in this setting, $\mathcal{D}_U$ can be partitioned into orbits indexed by $\mathcal{I}_i=\{(i,j):j=1,\dots,\mu_i\}$ as in \eqref{eq:orbit-dataset-partitioning}, and each orbit can be summarized by a single representative $\bar{\bx}_i=\bu_i=h(\bx_{ij})$ with label $y_i$ and weight $w_i=\mu_i/N$ as in \eqref{eq:slim-dataset-summarization}. 
Since the labels are scale-invariant, any invariant classifier $f_{\bw}(g\cdot\bx)=f_{\bw}(\bx)$ satisfies the identity $\hat{\mathscr{R}}_{\mathrm{full}}(f_{\bw})=\hat{\mathscr{R}}_{\mathcal{C}}(f_{\bw})$ as in Example~\ref{ex:deterministic-invariance}. Thus, one single representative per ray is sufficient to recover the same empirical risk as the full dataset under the GRINCO framework.

\subsubsection{\textbf{Orbit coverage and redundancy evaluation}}
\label{sec:rays:random}

We can quantify the expected orbit redundancy under a random coreset acquisition. Let $p_i:=\mu_i/N$ be the probability that a uniformly sampled point from $\mathcal{D}_U$ lies in orbit $i$, and let $U_B$ be the number of distinct orbits observed in the coreset after $B$ random queries. 
Then, assuming a sampling with replacement approximation for large $N$, the expected number of redundant queries (i.e., queries that do not introduce a new orbit) is
\begin{equation*}
  \mathbb{E}[\mathrm{Redundancy}]:=B-\mathbb{E}[U_B]=B-\sum_{i=1}^M \Bigl(1-(1-p_i)^B\Bigr).
\end{equation*}
Assuming that the orbits are balanced ($\mu_i=\mu$ for all $i$, and $p_i=1/M$) gives
\begin{equation*}
  \mathbb{E}[\mathrm{Redundancy}]
  =
  B-M\left(1-\left(1-\tfrac{1}{M}\right)^B\right).
\end{equation*}
Thus, in the case $M=B=4$, $\Pr(\text{no redundancy})\approx0.094$, so over $90\%$ of random batches contain at least one redundant sample. For unbalanced $\{\mu_i\}$, large orbits dominate random draws and redundancy is typically even higher.

\subsubsection{\textbf{Experimental configuration}}
\label{sec:rays:methods}

At budget $B$, we compare three coreset rules. (a) \emph{Random} uniformly samples $B$ points from $\mathcal{D}_U$, often yielding multiple samples from the same orbit. (b)~The \emph{Euclidean} baseline applies farthest-first $k$-center selection in the raw input space $\mathcal{X}$ by the update $S\leftarrow S\cup\{\bx^\star\}$, with $\bx^\star$ being the farthest point from the existing coreset $S$ at each iteration.
(c) GRINCO runs greedy $k$-center in the quotient space $\mathcal{X}/G$ using the pseudo-distance $d_{G,h}(\bx,\bx')=\|h(\bx)-h(\bx')\|_2$ with $h$ from \eqref{eq:rays:canonicalizer}. It iteratively selects a point $\bx^\star$ that maximizes the smallest quotient distance (which is invariant along rays) to the existing coreset $S$ as $\bx^\star \in \arg\max_{\bx\in \mathcal{D}_U}\;\min_{\bx'\in S} d_{G,h}(\bx,\bx')$.

We implement this synthetic experiment with $M=C=4$ rays (classes), with scale-invariant labels $y_i=i$ and orbit sizes $(\mu_1,\mu_2,\mu_3,\mu_4)=(400,200,100,100)$, yielding a pool of $N=800$ unlabeled samples. Along each ray, radii are sampled log-uniformly with $(r_{\min},r_{\max})=(0.1,10)$.
We evaluate the coreset redundancy for different acquisition budgets $B \in \{1,2,3,4,5,6,8,10\}$.
Classification performance is evaluated with a $1$-NN classifier (using the Euclidean metric in $\mathcal{X}$ for the \emph{Uniform} and \emph{Euclidean} baselines, and the quotient distance $d_{G,h}$ for GRINCO).
In addition to the overall test accuracy (OA), we report the \emph{orbit efficiency}, defined as the ratio of the number of distinct orbits represented in the coreset $U_B$ to the total number of selected samples $B$:
\begin{equation}
  \mathrm{\eta}(B) \;:=\; \frac{U_B}{B} \in (0,1] \,.
\end{equation}

\subsubsection{\textbf{Results and discussion}}
\label{sec:rays:results}

Figure~\ref{fig:rays:selection:B4} shows a sample of coreset acquisition at $B=4$ and the unlabeled pool.
The \textit{random} acquisition often selects multiple points on the same ray while missing some rays entirely.
The \emph{Euclidean coreset} baseline improves coverage, but it can still select multiple samples from the same orbit (leading to coreset redundancy). Samples along a ray at different radii might appear to be far apart in $\mathbb{R}^2$.
In contrast, the proposed GRINCO (operating in the quotient space with the distance $d_{G,h}$) selects at most one sample per orbit when $B$ does not exceed the number of distinct orbits $M$.

\begin{figure}[t]
    \centering
    \includegraphics[width=1.0\linewidth]{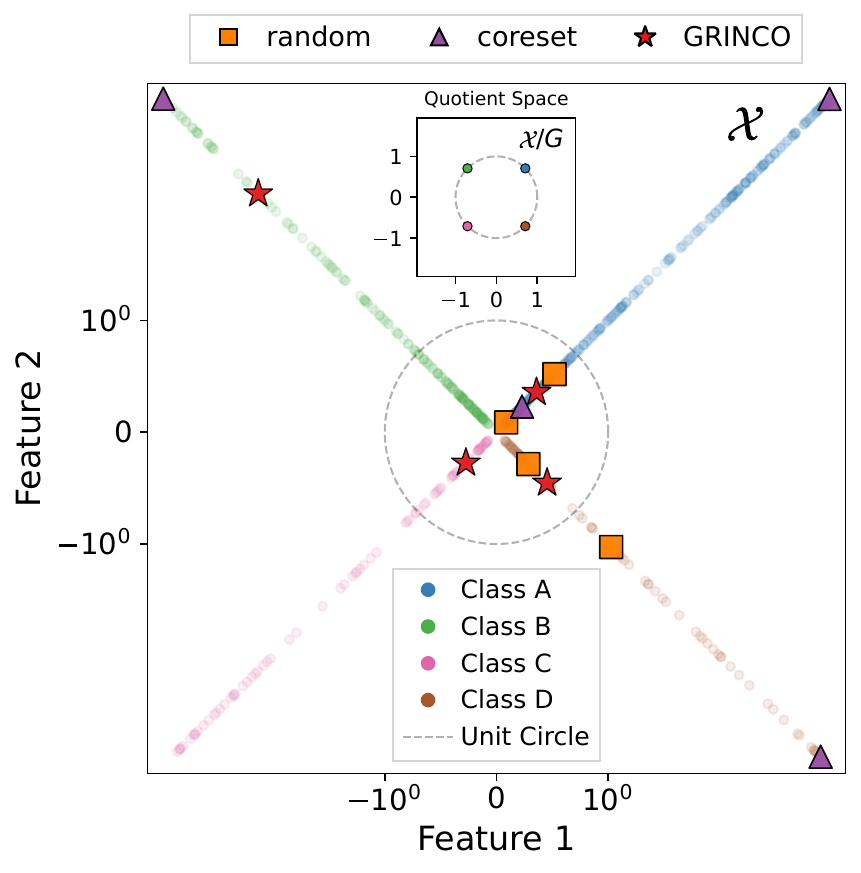} \\
    %
    \caption{Selected coresets for the \emph{rays dataset} at budget $B=4$ using \emph{random}, \emph{Euclidean coreset}, and GRINCO. Points are colored by class, and marker type indicates the selected samples. Class-wise counts are \textit{random} $(2,0,0,2)$, \textit{Euclidean coreset} $(2,1,0,1)$, and GRINCO $(1,1,1,1)$ for classes $(A,B,C,D)$.}
    \label{fig:rays:selection:B4}
\end{figure}

Figure~\ref{fig:rays:orbit-efficiency} shows orbit efficiency $\eta(B)$ as a function of budget, averaged over $30$ Monte Carlo runs. As expected, GRINCO achieves $\mathrm{\eta}(B)=1$ for $B\leq 4$, with one representative per orbit, while \textit{random} yields redundant selections even at very low budgets.
The \textit{Euclidean coreset} baseline achieves intermediate performance at $B=3,4$.
For $B>4$, orbit efficiency decreases for all methods as the budget exceeds the number of distinct orbits $M=4$.

\begin{figure}[htbp]
    \centering
    \includegraphics[width=0.8\linewidth]{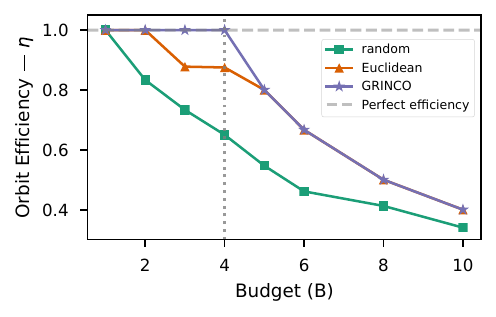} \\
    \caption{Orbit efficiency $\mathrm{\eta}$ on the \emph{rays dataset} (mean over $30$ runs) for \emph{random} sampling, \emph{Euclidean coreset}, and GRINCO. GRINCO achieves $\mathrm{\eta}=1$ for $B\leq M=4$.}
    \label{fig:rays:orbit-efficiency}
\end{figure}

The improvement in orbit efficiency translates directly into higher classification accuracy at low budgets. At $B=4$, \textit{random} attains $\mathrm{\eta}=0.65 \pm 0.15$ with $79.2\% \pm 9.9\%$ accuracy, while \textit{Euclidean coreset} improves to $\mathrm{\eta}=0.88 \pm 0.13$ and $93.8\% \pm 6.2\%$ accuracy (mean $\pm$ std). 
GRINCO achieves perfect orbit efficiency $\mathrm{\eta}=1.0 \pm 0.0$ and $100.0\% \pm 0.0\%$ accuracy, which is consistent with the orbit-invariant labeling model. Figure~\ref{fig:rays:selection:B4} explains this behavior: \textit{random} can spend the budget on a subset of the rays (especially on orbits that contain more samples in the unbalanced pool), reducing $U_B$ and leaving classes unlabeled; \emph{Euclidean coreset} spreads points in ambient space but can still pick points at multiple radii on the same ray; GRINCO enforces orbit diversity because $d_{G,h}$ is invariant to changes along the same ray, which guarantees the selection of at most one point per orbit when $B\leq M$.
This behavior persists across budgets.
GRINCO maintains perfect orbit efficiency up to $B=4$; for $B\geq 5$ both GRINCO and \textit{Euclidean coreset} consistently achieved full orbit coverage ($U_B=4$), leading to $\mathrm{\eta}(B)=4/B$.
\textit{Random} remained below this ceiling even at $B=10$, showing sustained redundancy, which is exacerbated by imbalanced orbits.
This experiment shows that quotient-space orbit-aware selection (GRINCO) avoids redundancy and achieves maximal orbit coverage per label, unlike \emph{random} and \emph{Euclidean} baselines.

\subsection{Image classification with rotation invariance}
\label{sec:exp:image-benchmarks}

{We evaluate GRINCO on image datasets to assess the impact of quotient-space coreset selection on label efficiency in AL under rotation invariance. This section defines the dataset construction, quotient-space distance, classifier, and AL protocol, and then compares GRINCO with several baselines in terms of classification accuracy and orbit efficiency.}

\subsubsection{\textbf{Experimental setup}}
\label{sec:exp-setup}

\paragraph{\textit{Datasets}} We consider the CIFAR-10~\cite{krizhevsky2009learning}, STL-10~\cite{coates2011analysis} and MNIST~\cite{lecun2002gradient} datasets and their \emph{rotated} variants, which are constructed by adding explicit rotation-induced symmetry. We use a standard train/test split of the datasets. For each dataset we select a class-balanced set of source images $\{\tilde{\bx}_i\}_{i=1}^{N_{\text{src}}}$ from the training split (which will be the \emph{raw} dataset). Then, for the \emph{rotated} dataset variants (i.e., rotated CIFAR-10, rotated STL-10 and rotated MNIST), for each datum $\tilde{\bx}_i$ we sample an orbit size $\mu_i\sim\mathrm{Unif}\{6,\dots,10\}$ and generate variants $\bx_{ij}:=g_{ij}\cdot\tilde{\bx}_i$, with $g_{ij}\sim\bbQ$ and $\bbQ$ being the Haar measure on $G$. We attribute to $\bx_{ij}$ the same label as $\tilde{\bx}_i$. Here $G=C_4=\{0^\circ,90^\circ,180^\circ,270^\circ\}$ for CIFAR-10 and STL-10, and $G=G_{\text{MNIST}}=\{0^\circ,\pm10^\circ,\pm20^\circ,\pm30^\circ\}$ for MNIST. This yields orbits of variable size and a controlled level of redundancy. The values of $N_{\text{src}}$ for each dataset will be reported below. We always evaluate on the standard test split for each dataset.

\paragraph{\textit{Embedders and classifier}} 
We decompose the full classifier as $f_{\mathbf{w}} = c_\theta \circ q_\varphi$ with parameters $\mathbf{w}=(\varphi,\theta)$, where $c_\theta$ represents the classifier head and $q_\varphi$ denotes a feature extractor.
We use pretrained SimCLR backbones~\cite{chen2020simclr} (discarding the projection head) to construct both the function $h$ in GRINCO's quotient distance in \eqref{eq:d_G_psi} as well as a feature extractor $q_{\phi}$. More precisely, let $\phi_\varphi(\bx)\in\mathbb{R}^{C\times H\times W}$ be the final convolutional feature map of SimCLR and let $\nu$ denote global average pooling. The baseline embedding is $q_\varphi(\bx)=\nu(\phi_\varphi(\bx))$. For GRINCO, we additionally include an explicit orbit averaging to ensure $f_{\mathbf{w}}$ is exactly $G$-invariant:
\begin{equation}
  q_\varphi(\bx) = \frac{1}{|G|} \sum_{g\in G} \nu(\phi_\varphi(g\cdot\bx)).
  \label{eq:orbit-avg-practical}
\end{equation}
During AL, we keep $q_\varphi$ frozen for all methods, and construct a linear classifier head $c_\theta(\bz)$ on top of the frozen embedding $\bz=q_\varphi(\bx)$ by first performing PCA to improve efficiency, followed by a softmax layer that is trained on the labeled set at every AL iteration, yielding \mbox{$f_{\bw}(\bx)=\mathrm{softmax}(\bW \bz+\bb)$}.

\paragraph{\textit{Acquisition and coupling}} GRINCO computes $d_{G,h}$ from \eqref{eq:d_G_psi} with $h=q_\varphi$ from~\eqref{eq:orbit-avg-practical} and selects a batch by greedy $k$-center in this quotient space pseudo-distance, and model training is performed using the orbit-averaged objective in \eqref{eq:orbit-averaged-risk-coreset} with $\bbQ$ being the Haar measure on $G$. Since the number of samples closest to each orbit in the GRINCO coreset is not known, we use uniform weights $w_i=1/K$ for the labeled orbits. This is a practical approximation of the coreset weights used in the theoretical analysis. 
We compare GRINCO against uniform \emph{random} sampling, \emph{entropy} and \emph{margin}~\cite{settles2009} (which are uncertainty-based), \emph{Euclidean} $k$-centers~\cite{sener2018coresets}, and BADGE~\cite{ash2020badge}. We also include a simple variant of the proposed approach, which we call GRINCO-b, by replacing $k$-centers with the $k$-means objective in the quotient metric (i.e., the coreset is constructed from the cluster centers). The $k$-means criterion encourages a more balanced coverage compared to $k$-centers, and also serves to illustrate how our orbit-level coreset formulation can be coupled with different coverage criteria.
Each run starts from an initial labeled set of size $F_0$, acquires $b$ new labels per round for $T$ rounds, reaching a total budget of $B=F_0+Tb$. We report the dataset-specific $F_0$, $b$, and $T$ below. 

\paragraph{\textit{Metrics, runs, and dataset versions}} We report overall test classification accuracy, label-efficiency curves (accuracy versus amount of queried labels), and orbit efficiency $\eta$ (proportion of unique orbits in the coreset).
Each simulation is repeated for $R=5$ Monte Carlo runs. For the \emph{rotated} datasets, which include additional rotated samples, we additionally evaluate the methods on five different random realizations of the dataset (indexed by $v\in\{0,1,2,3,4\}$), yielding a total of $25$ experiments per configuration. For standard datasets, we report mean $\pm$ std over the $R$ runs. For rotated variants, we first average over the $R$ runs for each dataset version, and then report mean $\pm$ std across the $v$ dataset versions to capture the variability across different random realizations of the pool.

\subsubsection{\textbf{Parameter settings}}\label{sec:exp:dataset-specific}

The parameters for each dataset and budget settings $(F_0,b,T,B)$ are as follows.

\paragraph{\textit{CIFAR-10}} 50k train and 10k test images, of size $32\times32\times3$. Pools with explicit redundancy use $N_{\text{src}}=2000$ class-balanced initial images $\tilde \bx_i$ with $C_4$ rotations, from which the different samples are generated as described previously. Backbone: SimCLR-pretrained ResNet-18 for CIFAR-10 ($3\times3$ conv1, no max pooling), final feature map $512\times4\times4$, embedding dimension $512$, PCA dimension $64$. AL setting: $(100,100,49,5000)$.

\paragraph{\textit{STL-10}} 5k labeled train, 8k test, and 100k unlabeled images, all $96\times96\times3$. Pools with explicit redundancy use $N_{\text{src}}=1000$ class-balanced initial images $\tilde \bx_i$ with $C_4$ rotations, from which the different samples are generated as described previously. Backbone: SimCLR-pretrained ResNet-18 for $96\times96$, final feature map $512\times3\times3$, embedding dimension $512$, PCA dimension $64$. AL setting: $(50,50,29,1500)$.

\paragraph{\textit{MNIST}} 60k train and 10k test images, $28\times28\times1$. To preserve digit semantics, we use $G_{\text{MNIST}}=\{0^\circ,\pm10^\circ,\pm20^\circ,\pm30^\circ\}$ instead of $C_4$. These rotations use bilinear interpolation with zero-padding and center-cropping. Pools for the rotated MNIST variations use $N_{\text{src}}=2000$ class-balanced initial images $\tilde \bx_i$ with $G_{\text{MNIST}}$ rotations, from which the different samples are generated as described previously. Backbone: SimCLR-pretrained ConvNet with two $5\times5$ convolutional blocks plus max pooling, final feature map $64\times4\times4$, embedding dimension $128$, and PCA dimension $8$. AL setting: $(10,10,49,500)$.

\subsubsection{\textbf{Results: Labeling efficiency under rotation-induced symmetry}}\label{sec:exp:random-query}

Before presenting the main AL results, we first show a simpler simulation that shows the impact of orbit-level coreset selection in classification performance, without the effect of the iterative AL pipeline. To this end, we consider a sample of the rotated CIFAR-10 dataset variant ($v=0$) described earlier (with $N_{\rm src}=2000$ and $N=18012$ total samples), and compare classification accuracy between GRINCO and uniform random sampling, for coreset sizes of $500$ and $2000$ labels. Note that $N_{\text{src}}$ is the maximum number of samples from unique orbits in the dataset.

Results are reported in Table~\ref{tab:simquery-cifar10}. It can be seen that GRINCO achieves nearly 100\% orbit efficiency, meaning that almost all selected labels correspond to unique orbits, while random sampling achieves 88.7\% orbit efficiency at 500 labels and about 65.1\% at 2000 labels. This translates into a higher overall classification accuracy (OA) for GRINCO, with 0.56\% improvement at 500 labels and 0.99\% improvement at 2000 labels, compared to random sampling. This shows how the proposed method can take into account symmetries in the dataset to improve the selection of samples and the use of labeling budget.
Figure~\ref{fig:aug-sampling-efficiency} shows OA and orbit efficiency $\eta$ as a function of the number of labels for the same simulation setting, from which it can be seen that the proposed strategy yields consistently higher OA and orbit efficiency across the range of labels. Moreover, the orbit efficiency of GRINCO only starts to decrease after 2000 labels, when the number of unique orbits in the dataset is exhausted.

\begin{table}[t]
    \centering
    \setlength{\tabcolsep}{4pt}
    \caption{Labeling-efficiency results on rotated CIFAR-10: orbit coverage and overall accuracy (OA) over $R=5$ runs.}
    \label{tab:simquery-cifar10}
    \begin{tabular}{@{}c l r r r@{}}
        \toprule
        \multicolumn{1}{c}{Labels} & \multicolumn{1}{c}{Strategy} & \multicolumn{1}{c}{\# unique orbits} & \multicolumn{1}{c}{$\eta$ (\%)} & \multicolumn{1}{c}{Test OA (\%)} \\
        \midrule
        \multirow{2}{*}{500} & GRINCO & $\mathbf{497\pm2}$ & $\mathbf{99.4\pm0.3}$ & $\mathbf{75.57\pm0.27}$ \\
        & Random & $443\pm3$ & $88.7\pm0.6$ & $75.01\pm0.29$ \\
        \midrule
        \multirow{2}{*}{2000} & GRINCO & $\mathbf{1997\pm2}$ & $\mathbf{99.9\pm0.1}$ & $\mathbf{78.43\pm0.21}$ \\
        & Random & $1302\pm13$ & $65.1\pm0.6$ & $77.44\pm0.16$ \\
        \bottomrule
    \end{tabular}
\end{table}

\begin{figure}[htbp]
    \centering
    \subfloat[]{%
      \includegraphics[width=0.40\textwidth]{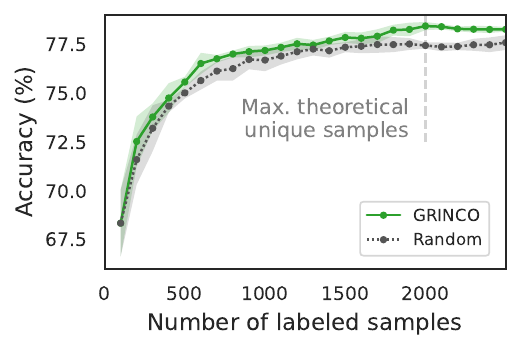}%
      \label{fig:aug-sampling-efficiency-a}%
    }
    
    \subfloat[]{%
      \includegraphics[width=0.40\textwidth]{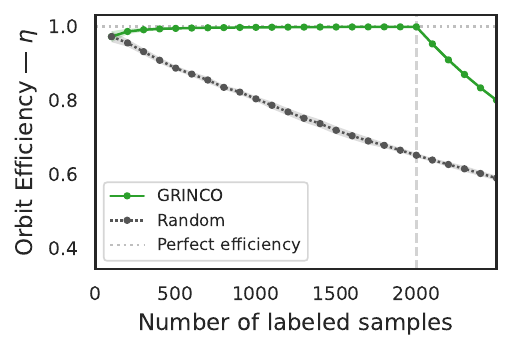}%
      \label{fig:aug-sampling-efficiency-b}%
    } \\
    \caption{Labeling-efficiency results on rotated CIFAR-10 ($v=0$) for GRINCO and random sampling. (a) Test overall accuracy (OA) as the labeling budget increases. (b) Orbit efficiency $\eta$, showing the fraction of queried labels that correspond to unique orbits.}
    \label{fig:aug-sampling-efficiency}
\end{figure}

\subsubsection{\textbf{Results: Full Active Learning pipeline}}\label{sec:exp:al-results}

This section reports the results of the experiments with the full AL pipeline described above on the CIFAR-10, STL-10, and MNIST datasets, as well as their rotated variants. We compare GRINCO and GRINCO-b to the baselines in terms of overall accuracy and orbit efficiency at a given budget, as well as their trajectories across AL iterations. 
Table~\ref{tab:al-summary} reports results for all datasets and methods for budgets of $B=2000$, $1500$, and $500$ labels for CIFAR-10, STL-10, and MNIST, respectively, while Figure~\ref{fig:aug-acc-orbits} shows the accuracy and orbit efficiency trajectories for the rotated CIFAR-10 dataset (STL-10 follows similar trends, and MNIST is discussed below).

\begin{table*}[t]
    \centering
    \setlength{\tabcolsep}{2pt}
    \caption{Active Learning results for all datasets and methods. We report mean $\pm$ std test OA (\%) and, for rotated variants, orbit efficiency $\eta$ (\%). The numbers of labeled samples are $2000$ for CIFAR-10, $1000$ for STL-10, and $500$ for MNIST. Best results are marked in bold.}
    \label{tab:al-summary}
    \begin{tabular}{l c cc c cc c cc}
        \toprule
        \multirow{2}{*}{Method} & \multicolumn{1}{c}{CIFAR-10} & \multicolumn{2}{c}{Rotated CIFAR-10} & \multicolumn{1}{c}{STL-10} & \multicolumn{2}{c}{Rotated STL-10} & \multicolumn{1}{c}{MNIST} & \multicolumn{2}{c}{Rotated MNIST} \\
        \cmidrule(lr){2-2} \cmidrule(lr){3-4} \cmidrule(lr){5-5} \cmidrule(lr){6-7} \cmidrule(lr){8-8} \cmidrule(lr){9-10}
        & OA (\%) & OA (\%) & $\eta$ (\%) & OA (\%) & OA (\%) & $\eta$ (\%) & OA (\%) & OA (\%) & $\eta$ (\%) \\
        \midrule
        Random & $78.4\pm0.4$ & $76.8\pm0.1$ & $65.0\pm0.1$ & $61.8\pm0.6$ & $57.8\pm0.8$ & $65.3\pm0.2$ & $95.9\pm0.2$ & $95.5\pm0.1$ & $89.5\pm0.4$ \\
        Entropy & $79.4\pm0.4$ & $74.4\pm0.4$ & $23.9\pm0.2$ & $62.9\pm0.4$ & $53.5\pm0.8$ & $31.4\pm0.7$ & $95.1\pm0.5$ & $94.4\pm1.0$ & $32.8\pm0.7$ \\
        Margin & $\boldsymbol{80.5\pm0.1}$ & $76.1\pm0.2$ & $31.6\pm0.2$ & $\boldsymbol{63.3\pm0.5}$ & $54.8\pm1.2$ & $37.8\pm0.6$ & $\boldsymbol{96.9\pm0.1}$ & $96.3\pm0.2$ & $44.3\pm0.6$ \\
        BADGE & $80.2\pm0.2$ & $77.7\pm0.2$ & $48.6\pm0.6$ & $63.1\pm0.3$ & $58.3\pm1.1$ & $60.5\pm0.8$ & $96.6\pm0.1$ & $96.4\pm0.2$ & $75.1\pm0.6$ \\
        Euclidean & $77.0\pm0.5$ & $78.7\pm0.2$ & $97.1\pm0.3$ & $61.3\pm0.3$ & $61.2\pm1.0$ & $99.4\pm0.2$ & $93.0\pm2.0$ & $94.4\pm1.7$ & $95.4\pm0.5$ \\
        GRINCO & $78.2\pm0.7$ & $\boldsymbol{79.7\pm0.3}$ & $\boldsymbol{99.9\pm0.0}$ & $62.4\pm0.5$ & $\boldsymbol{62.6\pm1.0}$ & $\boldsymbol{99.9\pm0.0}$ & $96.2\pm0.5$ & $95.8\pm1.0$ & $\boldsymbol{97.1\pm0.4}$ \\
        GRINCO-b & $79.2\pm0.4$ & $79.6\pm0.3$ & $99.1\pm0.0$ & $63.0\pm0.6$ & $62.4\pm0.9$ & $99.1\pm0.0$ & $96.8\pm0.1$ & $\boldsymbol{96.5\pm0.2}$ & $96.5\pm0.3$ \\
        \bottomrule
    \end{tabular}
\end{table*}

On the standard CIFAR-10 and STL-10 datasets, the uncertainty baselines achieve slightly stronger performance compared to the coreset approaches at the budgets reported in Table~\ref{tab:al-summary}. However, GRINCO and GRINCO-b are still competitive and consistently improve over the Euclidean $k$-centers coreset approach. On CIFAR-10, GRINCO-b improves OA by about $2.9\%$ relative to \emph{Euclidean}, and it stays within about $1.6\%$ relative to \emph{margin}, while on STL-10 it improves OA by about $2.8\%$ relative to \emph{Euclidean} and achieves performance that is within one standard deviation of \emph{margin}'s results.
For the rotated variants of the datasets, the performance of the proposed GRINCO and GRINCO-b methods is significantly stronger than the baselines. On rotated CIFAR-10, GRINCO improves OA by about $1.3\%$ relative to \emph{Euclidean} and by about $2.6\%$ relative to BADGE, and GRINCO-b performs similarly. On rotated STL-10, GRINCO improves OA by about $2.3\%$ relative to \emph{Euclidean} and by about $7.4\%$ relative to BADGE. Orbit efficiency follows the same trend, with GRINCO reaching nearly $100\%$ samples coming from unique orbits, which is higher than \emph{Euclidean}, and considerably higher than BADGE, \emph{margin}, \emph{entropy} and \emph{random}, all of which have orbit efficiency below $66\%$.
This illustrates the benefits of the proposed orbit-based coreset selection on datasets with explicit rotation-induced symmetry.

The results on MNIST presented in Table~\ref{tab:al-summary} illustrate the behavior of GRINCO on image data with the smaller-angle rotation group $G_{\text{MNIST}}$, instead of $C_4$. 
On standard MNIST, all methods achieve similar OA, with \emph{margin} and GRINCO-b (the top performing methods) within one standard deviation of each other.
On the rotated MNIST variant, OA is again tightly clustered, with GRINCO-b at $96.5$\%, BADGE at $96.4$\%, \emph{margin} at $96.3$\%, and GRINCO at $95.8$\%, which perform significantly better than the Euclidean $k$-centers. Orbit efficiency ($\eta$) is very high for the coreset approaches, as GRINCO reaches $97.1$\%, GRINCO-b $96.5$\%, and \emph{Euclidean} $95.4$\%, while BADGE, \emph{margin}, and \emph{entropy} fall to $75.1$\%, $44.3$\%, and $32.8$\%, respectively. These results show that quotient space coreset selection is most useful when redundancy in the dataset is high, where it achieves very high orbit coverage. 

The overall results indicate that uncertainty-based acquisition is very strong in low-redundancy regimes, however, orbit-aware methods remain very competitive despite being based on a different, purely geometric principle that does not take into account any information about the data labels, the classifier output, or its uncertainty. This emphasizes the generality of the coreset approach which, as a data summarization technique, is not restricted to a classification task.
This behavior is consistent with the quotient-space perspective developed in the theoretical formulation. When the pool contains many samples within the same orbit, methods that do not operate on the quotient space can repeatedly query points with the same semantic content, wasting labeling budget and reducing coverage. Instead, GRINCO and GRINCO-b use a quotient metric that collapses each orbit to a single point. Thus, their most important improvements appear in datasets with rotation-induced symmetry from a known group $G$, thereby improving orbit efficiency and yielding strong OA.

The baselines show complementary behavior across regimes. Random sampling is competitive only when the dataset does not have redundancy, while \emph{entropy} and \emph{margin} show strong performance on standard datasets but do not explicitly control orbit coverage, as they are based on different operating principles that target AL classification. BADGE and Euclidean $k$-centers have better orbit coverage, yet both still allocate labels to redundant samples. 
Both GRINCO and GRINCO-b achieve high OA and very high orbit efficiency. However, GRINCO-b's use of the $k$-means criterion promotes balanced coverage across orbits, which was shown to be beneficial in the CIFAR-10 and MNIST datasets. Further refining the quotient space selection criterion to take into account uncertainty or other task-specific information is a promising direction for future work, as it could leverage the complementary advantages of both approaches.

\begin{figure}[htbp]
    \centering
    \subfloat[]{%
        \includegraphics[width=0.48\textwidth]{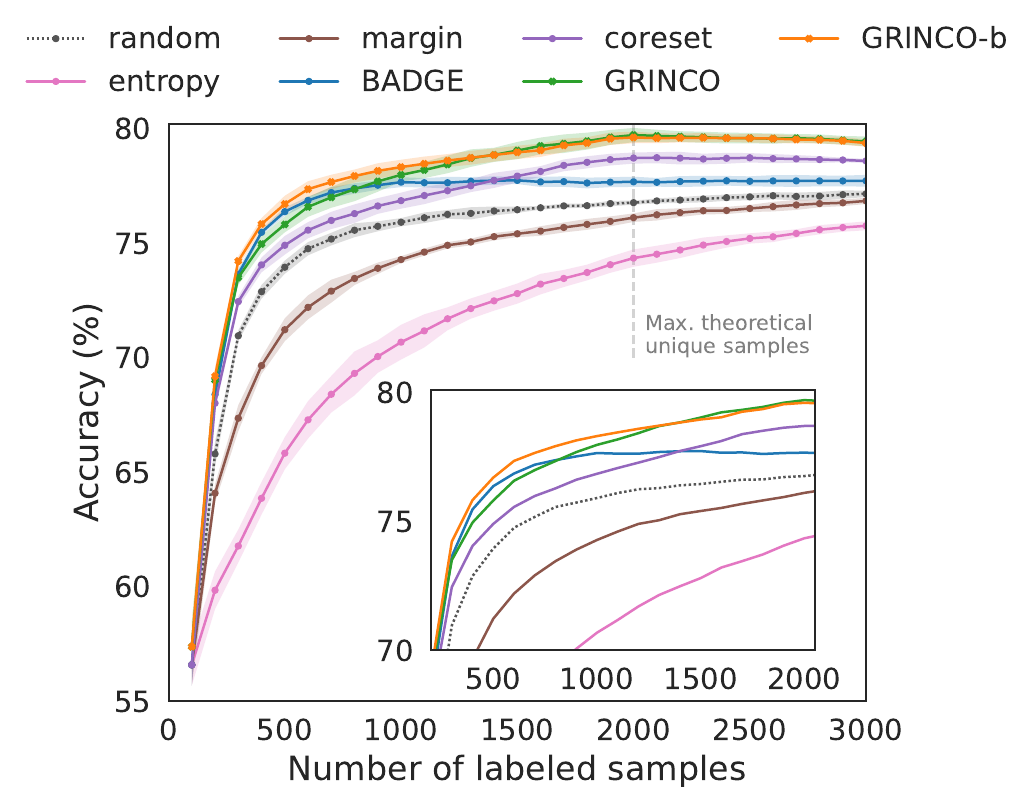}%
        \label{fig:aug-acc-orbits-acc}%
    }

    \subfloat[]{%
        \includegraphics[width=0.48\textwidth]{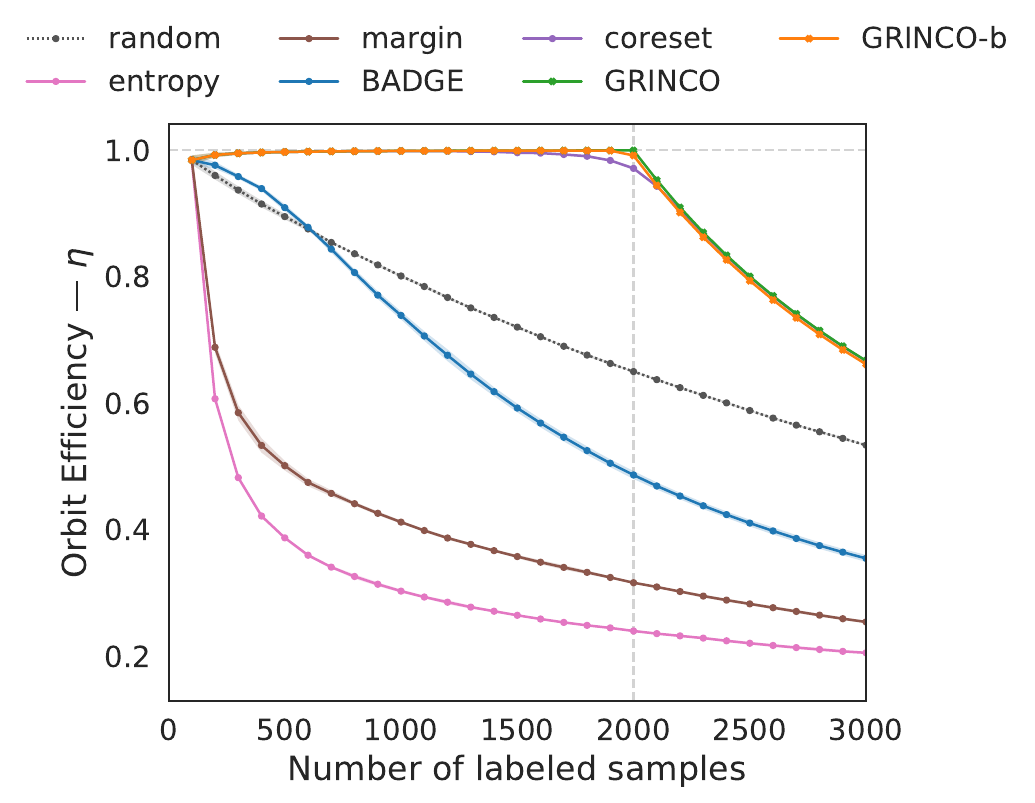}%
        \label{fig:aug-acc-orbits-orbits}%
    } \\
    \caption{Full active-learning trajectories on rotated CIFAR-10. (a) Test overall accuracy (OA) for the acquisition strategies as the number of labeled samples increases. (b) Orbit efficiency $\eta$ for the same strategies, showing how much of the labeling budget is allocated to unique orbits.}
    \label{fig:aug-acc-orbits}
\end{figure}

\section{Conclusion}\label{sec:conclusion}

We introduced a group-invariant coreset framework for AL that formulates a coverage criterion on the quotient space induced by the action of a transformation group. The method performs sample selection at the level of orbits rather than raw samples, exploiting appropriate metrics constructed from orbit-separating invariants. Model training in AL is performed using an orbit-averaged classification objective. This yields a representative coreset that avoids selecting and labeling samples which are redundant under the group symmetry. We provided theoretical analysis linking coreset coverage in the quotient space to excess risk in AL, and experimentally evaluated the approach on a synthetic \emph{rays} dataset as well as in real image benchmarks with rotation-induced symmetry. The results show that the proposed GRINCO reduces redundancy in the selected coreset and can significantly improve labeling efficiency when invariances are present in the data.
This work assumes knowledge of a group action and access to a quotient space distance or orbit-separating map. Extending the approach to settings with unknown or approximate data symmetries is an important direction for future work.

\section*{Acknowledgment}
{\footnotesize This work was supported in part by Coordena\c{c}\~{a}o de Aperfei\c{c}oamento de Pessoal de N\'{i}vel Superior (CAPES), Brazil, Finance Code 001; the French National Research Agency (ANR) under Grants ANR-23-CE23-0024, ANR-23-CE94-0001, and ANR-25-CE23-0949; and the Brazilian Research Council under Grant 304597/2023-6.\par}

\begingroup
\renewcommand{\footnotesize}{\fontsize{7.6pt}{8.6pt}\selectfont}
\bibliographystyle{IEEEtran}
\bibliography{refs}
\endgroup

\clearpage
\onecolumn
\appendix[Proof of the Generalization Theorem]\label{app:generalization-proof}
\setcounter{equation}{0}
\renewcommand{\theequation}{A.\arabic{equation}}
\renewcommand{\theHequation}{A.\arabic{equation}}

This proof of Theorem~\ref{thm:generalization_gcoresets} complements Subsection~\ref{subsec:generalization} by bounding the generalization gap via a decomposition into a full-dataset term and an invariant-coreset term.

\begin{proof}
We can decompose the generalization error as
\begin{align}
    & \underbrace{\big|\hat{\mathscr{R}}_{\mathcal{C}}^{\bbQ}(f_{\bw}) - \mathscr{R}(f_{\bw})\big|}_{\text{coreset generalization error}}
    \leq \underbrace{\big|\hat{\mathscr{R}}_{\mathrm{full}}^{\bbQ}(f_{\bw}) - \mathscr{R}(f_{\bw})\big|}_{\text{full dataset generalization term}} + \underbrace{\big|\hat{\mathscr{R}}_{\mathcal{C}}^{\bbQ}(f_{\bw}) - \hat{\mathscr{R}}_{\mathrm{full}}^{\bbQ}(f_{\bw})\big|}_{\text{invariant coreset excess risk}} \,.
\end{align}
Thus, we see that the generalization error includes an excess risk due to the use of the representative coreset $\mathcal{C}$ compared to the full dataset, given by $|\hat{\mathscr{R}}_{\mathcal{C}}^{\bbQ}(f_{\bw}) - \hat{\mathscr{R}}_{\mathrm{full}}^{\bbQ}(f_{\bw})|$.
Expanding the last term, without assuming a constant label per orbit:
\begin{align}
    & |\hat{\mathscr{R}}_{\mathcal{C}}^{\bbQ}(f_{\bw}) - \hat{\mathscr{R}}_{\mathrm{full}}^{\bbQ}(f_{\bw})| \nonumber \\
    & = \Big|\frac{1}{N} \sum_{n=1}^N \bbE_{\bbQ(g)}\{\mathcal{L}(f_{\bw}(g\cdot\bx_n),y_n)\} - \sum_{i=1}^{K} w_i \bbE_{\bbQ(g)}\{\mathcal{L}(f_{\bw}(g\cdot\bar{\bx}_i),y_i)\} \Big| \nonumber \\
    & \leq \Big|\frac{1}{N} \sum_{n=1}^N \bbE_{\bbQ(g)}\{\mathcal{L}(f_{\bw}(g\cdot\bx_n),y_n)\} - \frac{1}{N} \sum_{n=1}^N \bbE_{p(Y|g\cdot \bx_n) \bbQ(g) }\{\mathcal{L}(f_{\bw}(g\cdot \bx_n),Y)\} \Big| \nonumber \\
    & \quad + \Big|\sum_{i=1}^{K} w_i \bbE_{\bbQ(g)}\{\mathcal{L}(f_{\bw}(g\cdot\bar{\bx}_i),y_i)\} - \frac{1}{N} \sum_{n=1}^N \bbE_{p(Y|g\cdot \bx_n) \bbQ(g) }\{\mathcal{L}(f_{\bw}(g\cdot \bx_n),Y)\}\Big|
    \label{eq:thm_eq_part1_twoterms}
\end{align}
We now have to bound each of the terms in the above equation.

\textbf{First term:}
To upper bound the first term in \eqref{eq:thm_eq_part1_twoterms}, we make use of Bernstein's inequality and the boundedness of the loss, i.e., the fact that
\begin{align}
    & \hspace{-10ex} |\bbE_{\bbQ(g)}\{\mathcal{L}(f_{\bw}(g\cdot\bx_n),y_n)\}-\bbE_{p(Y|g\cdot \bx_n) \bbQ(g) }\{\mathcal{L}(f_{\bw}(g\cdot \bx_n),Y)\}| \nonumber 
    \\
    & \leq \max_{y\in\mathcal{Y}, \, n\in\{1,\ldots,N\}} \bbE_{\bbQ(g)}\{\mathcal{L}(f_{\bw}(g\cdot\bx_n),y)\} \nonumber \\
    & \leq L_{\max} \,. \nonumber 
\end{align}

Since $(\bx_n,y_n)$ are i.i.d., we can define the random variable $Z$ and its i.i.d. realizations $Z_n$ as 
\begin{align}
    Z & = \bbE_{\bbQ(g)}\{\mathcal{L}(f_{\bw}(g\cdot \bX),Y)\} - \bbE_{p(Y'|g\cdot \bX) \bbQ(g) }\{\mathcal{L}(f_{\bw}(g\cdot \bX),Y')\}
    \\
    Z_n & =\bbE_{\bbQ(g)}\{\mathcal{L}(f_{\bw}(g\cdot\bx_n),y_n)\} - \bbE_{p(Y|g\cdot \bx_n) \bbQ(g) }\{\mathcal{L}(f_{\bw}(g\cdot \bx_n),Y)\}
\end{align}
Note that $Z\equiv Z(\bX,Y)$ is a function of $\bX$ and $Y$, and can also be written as the orbit-average of a random variable $\widetilde{Z}(\bX,Y)$ as
\[
Z = \bbE_{\bbQ(g)}\{\widetilde{Z}(g\cdot \bX, Y)\}, \quad \text{ where } \quad  \widetilde{Z}(\bX,Y)=\mathcal{L}(f_{\bw}(\bX),Y) - \bbE_{p(Y'|\bX)}\{\mathcal{L}(f_{\bw}(\bX),Y')\} \,.
\]

One can verify that $Z$ is zero mean due to the group invariance property of $p(Y|\bX)$, that is,
\begin{align}
    & \bbE_{p(Y|\bX)p(\bX)}\{Z\} \nonumber \\
    & = \bbE_{p(Y|\bX)p(\bX)}\Big\{\bbE_{\bbQ(g)}\{\mathcal{L}(f_{\bw}(g\cdot \bX),Y)\} - \bbE_{p(Y'|g\cdot \bX) \bbQ(g) }\{\mathcal{L}(f_{\bw}(g\cdot \bX),Y')\}\Big\} \nonumber \\
    & = \bbE_{p(\bX)}\Big\{\bbE_{p(Y|\bX)}\Big\{\bbE_{\bbQ(g)}\{\mathcal{L}(f_{\bw}(g\cdot \bX),Y)\}\Big\} - \bbE_{p(Y'|\bX) \bbQ(g) }\{\mathcal{L}(f_{\bw}(g\cdot \bX),Y')\}\Big\} \nonumber \\ 
    & = 0 \,. \nonumber
\end{align}
Applying Bernstein's inequality~\cite{maurer2019bernsteinBoundedInteraction,sridharan2002gentleConcentrationIneq} to bound $\frac{1}{N}\sum_{n=1}^N Z_n$ gives, with probability at least $1-\gamma$,
\begin{align}
    & \Big|\frac{1}{N} \sum_{n=1}^N \bbE_{\bbQ(g)}\{\mathcal{L}(f_{\bw}(g\cdot\bx_n),y_n)\} - \frac{1}{N} \sum_{n=1}^N \bbE_{p(Y|g\cdot \bx_n) \bbQ(g) }\{\mathcal{L}(f_{\bw}(g\cdot \bx_n),Y)\} \Big| \nonumber \\
    & \le
    \sqrt{ \frac{2\sigma_Z^2 \ln(2/\gamma)}{N} }
    + \frac{2\,L_{\max}\,\ln(2/\gamma)}{3N}.
    \label{eq:thm_bernstein_1}
\end{align}
where $\sigma_Z^2:=\mathrm{Var}(Z)$ is the variance of $Z$. Since $Z\equiv Z(\bX,Y)$ is a function of $\bX$ and $Y$, it can be expressed using the law of total variance as
\begin{align}
    \mathrm{Var}_{p(Y,\bX)}(Z(\bX,Y)) & = \mathrm{Var}_{p(Y,\bX)}(\bbE_{\bbQ(g)}\{\widetilde{Z}(g\cdot \bX,Y)\})
    \nonumber\\
    & = \bbE_{p(\bX)}\{\mathrm{Var}_{p(Y|\bX)}(Z(\bX,Y)|\bX)\} + \mathrm{Var}_{p(\bX)}(\bbE_{p(Y|\bX)}\{Z(\bX,Y)\}) \,.
    \label{eq:thm_total_var_1}
\end{align}
Denote the following functions to lighten the notation:
\begin{align}
    \zeta(\bX) & = \bbE_{\bbQ(g)}\{\bbE_{p(Y|\bX)}\{\widetilde{Z}(g\cdot \bX,Y)\}\}
    \\
    \widetilde{\zeta}(\bX) & = \bbE_{p(Y|\bX)}\{\widetilde{Z}(\bX,Y)\}
    \\
    \xi & = \bbE_{p(\bX)}\{\mathrm{Var}_{p(Y|\bX)}(Z(\bX,Y)|\bX)\}
\end{align}
First, for the term $\xi$ (which is the first term in \eqref{eq:thm_total_var_1}), using Jensen's inequality in measure-theoretic form, and the fact that $Z$ is (conditionally) zero mean (i.e., $\bbE_{p(Y'|g\cdot \bX)}\{Z\}=0$ for any $g$) which makes the map $Z\mapsto \mathrm{Var}_{p(Y|\bX)}(Z|\bX)$ convex, we can write
\begin{align}
    0 \le \xi & = \bbE_{p(\bX)}\{\mathrm{Var}_{p(Y|\bX)}(Z(\bX,Y)|\bX)\} \nonumber \\
    & = \bbE_{p(\bX)}\{\mathrm{Var}_{p(Y|\bX)}\big(\bbE_{\bbQ(g)}\{\widetilde{Z}(g\cdot \bX, Y)\}|\bX\big)\} \nonumber \\
    & \le \bbE_{p(\bX)}\{\bbE_{\bbQ(g)}\{\mathrm{Var}_{p(Y|\bX)}\big(\widetilde{Z}(g\cdot \bX, Y)|\bX\big)\}\} \nonumber \\
    & := \bbE_{p(\bX)}\{\bbE_{\bbQ(g)}\{ \widetilde{\xi}(g\cdot \bX) \}\}
    \label{eq:thm_group_inv_1}
\end{align}
where we used the group invariance property $p(Y|\bX)=p(Y|g\cdot \bX)$, and $\widetilde{\xi}$ is given by
\[\widetilde{\xi}(\bX) = \mathrm{Var}_{p(Y|\bX)}\big(\widetilde{Z}(\bX, Y)|\bX\big) \,.\]

For the second term in \eqref{eq:thm_total_var_1}, using Lemma 1 in \cite{chen2020group}, we also have
\begin{align}
    \mathrm{Var}_{p(\bX)}(\bbE_{p(Y|\bX)}\{Z(\bX,Y)\}) & = \mathrm{Var}_{p(\bX)}(\bbE_{\bbQ(g)}\{\bbE_{p(Y|\bX)}\{\widetilde{Z}(g\cdot \bX,Y)\}\}) \nonumber \\
    & = \mathrm{Var}_{p(\bX)}(\zeta(\bX)) \nonumber \\
    & = \mathrm{Var}_{p(\bX)}(\widetilde{\zeta}(\bX)) - \bbE_{p(\bX)} \{\mathrm{Var}_{\bbQ(g)}(\widetilde{\zeta}(g\cdot \bX))\} \,,
    \label{eq:thm_group_inv_2}
\end{align}
plugging \eqref{eq:thm_group_inv_1} and \eqref{eq:thm_group_inv_2} back into \eqref{eq:thm_bernstein_1}, 
\begin{align}
    & \Big|\frac{1}{N} \sum_{n=1}^N \bbE_{\bbQ(g)}\{\mathcal{L}(f_{\bw}(g\cdot\bx_n),y_n)\} - \frac{1}{N} \sum_{n=1}^N \bbE_{p(Y|g\cdot \bx_n) \bbQ(g) }\{\mathcal{L}(f_{\bw}(g\cdot \bx_n),Y)\} \Big| \nonumber \\
    & \le \sqrt{ \frac{2 \ln(2/\gamma)}{N} V(\bbQ) }
    + \frac{2\,L_{\max}\,\ln(2/\gamma)}{3N}.
    \label{eq:thm_bernstein_2}
\end{align}
where
\begin{align}
    V(\bbQ) = \bbE_{p(\bX)}\{\bbE_{\bbQ(g)}\{ \widetilde{\xi}(g\cdot \bX) \}\} + \mathrm{Var}_{p(\bX)}(\widetilde{\zeta}(\bX)) - \bbE_{p(\bX)} \{\mathrm{Var}_{\bbQ(g)}(\widetilde{\zeta}(g\cdot \bX))\}
    \nonumber
\end{align}
is the variance term as defined in \eqref{eq:thm_variance_term}.
Note that for large $N$ the first term dominates the convergence speed.

\textbf{Second term:}
For the second term, we note that for every point $\bx_n$ there is a representative in the coreset that is $\varepsilon$-close. Thus, assuming without loss of generality that each $\bx_n$ has a unique nearest neighbor in the coreset (consistent with the construction of the representative coreset, where the selected samples belong to disjoint orbits), we can write:
\begin{align}
    & \Big|\sum_{i=1}^{K} w_i \bbE_{\bbQ(g)}\{\mathcal{L}(f_{\bw}(g\cdot\bar{\bx}_i),y_i)\} \,\,-\,\, \frac{1}{N} \sum_{n=1}^N \bbE_{p(Y|g\cdot \bx_n) \bbQ(g) }\{\mathcal{L}(f_{\bw}(g\cdot \bx_n),Y)\}\Big| \nonumber \\
    & = \Big|\sum_{i=1}^{K} \Big(w_i\bbE_{\bbQ(g)}\{\mathcal{L}(f_{\bw}(g\cdot\bar{\bx}_i),y_i)\}  \,\,-\,\, \frac{1}{N} \sum_{j : [\bar{\bx}_i]_G \text{ is closest to } [\bx_j]_G} \bbE_{p(Y|g\cdot \bx_j) \bbQ(g) }\{\mathcal{L}(f_{\bw}(g\cdot \bx_j),Y)\}\Big)\Big|  \,.
    \label{eq:thm_eq_2a} 
\end{align}
Using an approach similar to that in \cite{sener2018coresets}, we can write,
\begin{align}
    & \bbE_{p(Y|g\cdot \bx_j) \bbQ(g) }\{\mathcal{L}(f_{\bw}(g\cdot \bx_j),Y)\} \nonumber \\
    & = \bbE_{\bbQ(g) }\Big\{\sum_{c\in\mathcal{Y}} p(Y=c|g\cdot \bar{\bx}_i) \mathcal{L}(f_{\bw}(g\cdot \bar{\bx}_i),c) \nonumber \\
    & \qquad + p(Y=c|g\cdot \bar{\bx}_i) \big(\mathcal{L}(f_{\bw}(g\cdot \bx_j),c) - \mathcal{L}(f_{\bw}(g\cdot \bar{\bx}_i),c) \big)
    \nonumber \\
    & \qquad + \big(p(Y=c|g\cdot \bx_j) - p(Y=c|g\cdot \bar{\bx}_i)\big) \mathcal{L}(f_{\bw}(g\cdot \bx_j),c)
    \Big\} 
\end{align}
We now use the fact that the labeling function and loss function are Lipschitz continuous with respect to the quotient pseudo-distance in \eqref{eq:thm_a_lip_p} and~\eqref{eq:thm_a_lip_L}, that is,
\begin{align}
    |p(Y=c|\bx_j) - p(Y=c|\bar{\bx}_i)| & \leq L_{p} d_{G,h}(\bx_j,\bar{\bx}_i)
    \\
    |\bbE_{\bbQ(g) }\big\{\mathcal{L}(f_{\bw}(g\cdot \bx_j),Y)\big\} - \bbE_{\bbQ(g) }\big\{\mathcal{L}(f_{\bw}(g\cdot \bar{\bx}_i),Y)\big\} |  & \leq L_{\mathcal{L}} d_{G,h}(\bx_j,\bar{\bx}_i)
\end{align}
and that the loss is bounded as $\mathcal{L}(y,\hat{y})\le L_{\max}$ over the domain. Using the per-class bound and summing over $c$ gives $\sum_{c\in\mathcal{Y}} |p(Y=c|\bx_j)-p(Y=c|\bar{\bx}_i)| \le C L_p d_{G,h}(\bx_j,\bar{\bx}_i)$. Thus, the triangle inequality gives us
\begin{align}
    & \Big|\sum_{i=1}^{K} \Big(w_i\bbE_{\bbQ(g)}\{\mathcal{L}(f_{\bw}(g\cdot\bar{\bx}_i),y_i)\} - \frac{1}{N} \sum_{j : [\bar{\bx}_i]_G \text{ is closest to } [\bx_j]_G} \bbE_{p(Y|g\cdot \bx_j) \bbQ(g) }\{\mathcal{L}(f_{\bw}(g\cdot \bx_j),Y)\}\Big)\Big| \nonumber \\
    & \leq \Big|\sum_{i=1}^{K} \Big(w_i\bbE_{\bbQ(g)}\{\mathcal{L}(f_{\bw}(g\cdot\bar{\bx}_i),y_i)\} - \frac{N_i}{N} \bbE_{p(Y|g\cdot \bar{\bx}_i)\bbQ(g)}\{\mathcal{L}(f_{\bw}(g\cdot\bar{\bx}_i),Y)\} \Big)\Big| \nonumber \\
    & \qquad + \Big|\frac{1}{N} \sum_{i=1}^{K} \sum_{j : [\bar{\bx}_i]_G \text{ is closest to } [\bx_j]_G} \bbE_{\bbQ(g)}\Big\{ 
    \sum_{c\in\mathcal{Y}} p(Y=c|g\cdot \bar{\bx}_i) \big(\mathcal{L}(f_{\bw}(g\cdot \bx_j),c) - \mathcal{L}(f_{\bw}(g\cdot \bar{\bx}_i),c) \big)
    \nonumber \\
    &\qquad + \big(p(Y=c|g\cdot \bx_j) - p(Y=c|g\cdot \bar{\bx}_i)\big) \mathcal{L}(f_{\bw}(g\cdot \bx_j),c)
    \Big\} \Big| \nonumber \\
    & \leq \Big|\sum_{i=1}^{K} \Big(w_i\bbE_{\bbQ(g)}\{\mathcal{L}(f_{\bw}(g\cdot\bar{\bx}_i),y_i)\} - \frac{N_i}{N} \bbE_{p(Y|g\cdot \bar{\bx}_i)\bbQ(g)}\{\mathcal{L}(f_{\bw}(g\cdot\bar{\bx}_i),Y)\}  \Big)\Big| \nonumber \\
    & \qquad + \frac{1}{N} \sum_{i=1}^{K} \sum_{j : [\bar{\bx}_i]_G \text{ is closest to } [\bx_j]_G} \bbE_{\bbQ(g)}\Big\{ 
    L_{\mathcal{L}} d_{G,h}(\bx_j,\bar{\bx}_i)  + C L_{p} L_{\max} d_{G,h}(\bx_j,\bar{\bx}_i) \Big\} \nonumber \\
    & \leq \Big|\sum_{i=1}^{K} \Big(w_i\bbE_{\bbQ(g)}\{\mathcal{L}(f_{\bw}(g\cdot\bar{\bx}_i),y_i)\} - \frac{N_i}{N} \bbE_{p(Y|g\cdot \bar{\bx}_i)\bbQ(g)}\{\mathcal{L}(f_{\bw}(g\cdot\bar{\bx}_i),Y)\}  \Big)\Big| \nonumber 
    \\  
    & \qquad + (L_{\mathcal{L}}  + C L_{p} L_{\max}) \varepsilon
    \label{eq:thm_eq_3a}
\end{align}
where $N_i = \# \{j : [\bar{\bx}_i]_G \text{ is closest to } [\bx_j]_G\}$, and the last step comes from the coreset coverage condition.

From the representative coreset definition in Subsection~\ref{subsec:slim-learning}, the coreset weights can be expressed in terms of $N_i$ as $w_i = N_i/N$. This allows us to bound the first term in \eqref{eq:thm_eq_3a} as
\begin{align}
    & \bbE_{\bbQ(g)}\{\mathcal{L}(f_{\bw}(g\cdot\bar{\bx}_i),y_i)\} - \bbE_{p(Y|g\cdot \bar{\bx}_i)\bbQ(g)}\{\mathcal{L}(f_{\bw}(g\cdot\bar{\bx}_i),Y)\} \nonumber\\
    & = \bbE_{\bbQ(g)}\Big\{\sum_{c\in\mathcal{Y}} \big(\delta_{c-y_i} - p(Y=c|g\cdot \bar{\bx}_i)\big) \mathcal{L}(f_{\bw}(g\cdot\bar{\bx}_i),c) \Big\} \nonumber \\
    & = \bbE_{\bbQ(g)}\Big\{\sum_{c\in\mathcal{Y}} p(Y=c|g\cdot \bar{\bx}_i) \big(\mathcal{L}(f_{\bw}(g\cdot\bar{\bx}_i),y_i) - \mathcal{L}(f_{\bw}(g\cdot\bar{\bx}_i),c) \big) \Big\} \nonumber \\
    & \le
        L_{\max} \bbE_{\bbQ(g)}\{ \sum_{c\in\mathcal{Y}} |\delta_{c-y_i} - p(Y=c|g\cdot \bar{\bx}_i)|\} 
    \\
       & = 2(1-p(Y=y_i| \bar{\bx}_i)) L_{\max}
    \nonumber \\
    & = 2 L_{\max} \eta_i
    \label{eq:thm_eq_4a}
\end{align}
where $\eta_i:= 1-p(Y=y_i| \bar{\bx}_i)$ is the label uncertainty in the $i$-th coreset sample.
Thus, we can bound \eqref{eq:thm_eq_3a} by
\begin{align}
    \leq 2 L_{\max} \sum_{i=1}^{K} w_i \eta_i + (L_{\mathcal{L}}  + C L_{p} L_{\max}) \varepsilon \,.
    \label{eq:thm_eq_5a}
\end{align}
\textbf{Concluding:} By combining the bounds in \eqref{eq:thm_bernstein_2} and \eqref{eq:thm_eq_5a} we obtain the desired result.
\end{proof}

\end{document}